\PassOptionsToPackage{prologue,table,xcdraw}{xcolor}
\PassOptionsToPackage{prologue,table,xcdraw}{color}
\documentclass[sigconf]{acmart}

\usepackage{graphicx}
\usepackage{subcaption}
\usepackage{microtype}
\usepackage{balance}
\usepackage{stfloats}
\usepackage{makecell}
\usepackage{hyperref}
\usepackage{url}
\usepackage{graphicx}
\usepackage{wrapfig}
\usepackage{booktabs}
\usepackage{multirow}
\usepackage{algorithm}
\usepackage{algpseudocode}
\usepackage{amsmath}
\usepackage{amsfonts}
\usepackage{multicol}
\usepackage{enumitem}
\usepackage{pifont}
\usepackage{wrapfig}
\usepackage{xcolor}
\usepackage{soul}
\usepackage[most]{tcolorbox}
\usepackage{colortbl}
\usepackage{lipsum}
\usepackage{mathtools}
\sethlcolor{gray!20}      % 设置高亮底色

\newcommand{\PreserveBackslash}[1]{\let\temp=\\#1\let\\=\temp}
\newcolumntype{C}[1]{>{\PreserveBackslash\centering}p{#1}}
\newcolumntype{R}[1]{>{\PreserveBackslash\raggedleft}p{#1}}
\newcolumntype{L}[1]{>{\PreserveBackslash\raggedright}p{#1}}

\newtheorem{theorem}{Theorem}

\newcommand{\ourmethod}{APAO}

\AtBeginDocument{%
  }

\copyrightyear{2026}
\acmYear{2026}
\setcopyright{cc}
\setcctype{by}
\acmConference[KDD '26]{Proceedings of the 32nd ACM SIGKDD Conference on Knowledge Discovery and Data Mining V.2}{August 09--13, 2026}{Jeju Island, Republic of Korea}
\acmBooktitle{Proceedings of the 32nd ACM SIGKDD Conference on Knowledge Discovery and Data Mining V.2 (KDD '26), August 09--13, 2026, Jeju Island, Republic of Korea}
\acmISBN{979-8-4007-2259-2/2026/08}
\acmDOI{10.1145/3770855.3818052}

\begin{document}

%%
%% The "title" command has an optional parameter,
%% allowing the author to define a "short title" to be used in page headers.
\title{APAO: Bridging the Training-Inference Gap in Generative Recommendation via Adaptive Prefix-Aware Optimization}

\author{Yuanqing Yu}
\affiliation{
  \institution{DCST, Tsinghua University}
  \city{Beijing 100084}
  \country{China}
}
\affiliation{
  \institution{Quancheng Laboratory}
  \city{Jinan, Shandong}
  \country{China}
}
\email{yyq23@mails.tsinghua.edu.cn}

\author{Yifan Wang}
\affiliation{
  \institution{DCST, Tsinghua University}
  \city{Beijing 100084}
  \country{China}
}
\email{yf-wang21@mails.tsinghua.edu.cn}

\author{Weizhi Ma}
\authornote{Corresponding author.
% This work is supported by the Natural Science Foundation of China (Grant No. 62372260) and the National Key Research and Development Program of China under Grant 2024YFC3307403. Weizhi Ma is also sponsored by the Beijing Nova Program.
}

\affiliation{%
  \institution{AIR, Tsinghua University}
  \city{Beijing 100084}
  \country{China}
}
\email{mawz@tsinghua.edu.cn}

\author{Zhiqiang Guo}
\affiliation{
  \institution{DCST, Tsinghua University}
  \city{Beijing 100084}
  \country{China}
}
\email{georgeguo.gzq.cn@gmail.com}

\author{Min Zhang}
\authornotemark[1]
\affiliation{
  \institution{Quancheng Laboratory}
  \city{Jinan, Shandong}
  \country{China}
}
\affiliation{%
  \institution{DCST, Tsinghua University}
  \city{Beijing 100084}
  \country{China}
}
\email{z-m@tsinghua.edu.cn}

\renewcommand{\shortauthors}{Yuanqing Yu, Yifan Wang, Weizhi Ma, Zhiqiang Guo, \& Min Zhang}

%
% The code below is generated by the tool at http://dl.acm.org/ccs.cfm.
% Please copy and paste the code instead of the example below.

\begin{CCSXML}
<ccs2012>
   <concept>
       <concept_id>10002951.10003317.10003347.10003350</concept_id>
       <concept_desc>Information systems~Recommender systems</concept_desc>
       <concept_significance>500</concept_significance>
       </concept>
 </ccs2012>
\end{CCSXML}

\ccsdesc[500]{Information systems~Recommender systems}

%%
%% The abstract is a short summary of the work to be presented in the
%% article.
\begin{abstract}
Generative recommendation has recently emerged as a promising paradigm for sequential recommendation. It formulates the task as an autoregressive generation process, predicting tokens of the next item conditioned on user interaction histories.
% 问题
Existing generative recommendation models are typically trained with token-level likelihood objectives such as cross-entropy loss, while employing beam search during inference to generate ranked candidates.
However, this leads to a fundamental \textbf{training-inference inconsistency}: standard training assumes ground-truth tokens are always available, while beam search prunes low-probability branches during inference, causing the correct item to be prematurely discarded when its prefixes receive low scores.
% Ours
To address this issue, we propose the \textbf{Adaptive Prefix-Aware Optimization (\ourmethod{})} framework, which introduces prefix-level optimization losses to better align the training objective with the inference setting.
Furthermore, we design an adaptive worst-prefix optimization strategy that dynamically focuses on the most vulnerable prefixes during training, thereby enhancing the model's ability to retain correct candidates under beam search constraints.
We provide theoretical analyses to demonstrate the effectiveness and efficiency of our framework. Extensive experiments show that \ourmethod{} consistently alleviates the training-inference inconsistency and improves performance across generative recommendation backbones.
The source code is publicly available at \href{https://github.com/yuyq18/APAO}{https://github.com/yuyq18/APAO}.
\end{abstract}

%%
%% Keywords. The author(s) should pick words that accurately describe
%% the work being presented. Separate the keywords with commas.
\keywords{Generative Recommendation; Training-inference inconsistency; Loss Function}

\maketitle
\newcommand\kddavailabilityurl{https://doi.org/10.5281/zenodo.20439389}
\ifdefempty{\kddavailabilityurl}{}{
\begingroup\small\noindent\raggedright\textbf{Resource Availability:}\\
% please change the following context to include multiple artifacts if necessary, including data, models, code, etc.
The source code of this paper has been made publicly available at \url{\kddavailabilityurl} and \url{https://github.com/yuyq18/APAO}.
\endgroup
}

% \section*{Relevance}
% % Relevance
% This work improves generative recommendation on the web through prefix-aware optimization, aligning with the User Modeling, Personalization, and Recommendation track.

\begin{figure}[t]
    \centering
    \includegraphics[width=1.0\linewidth]{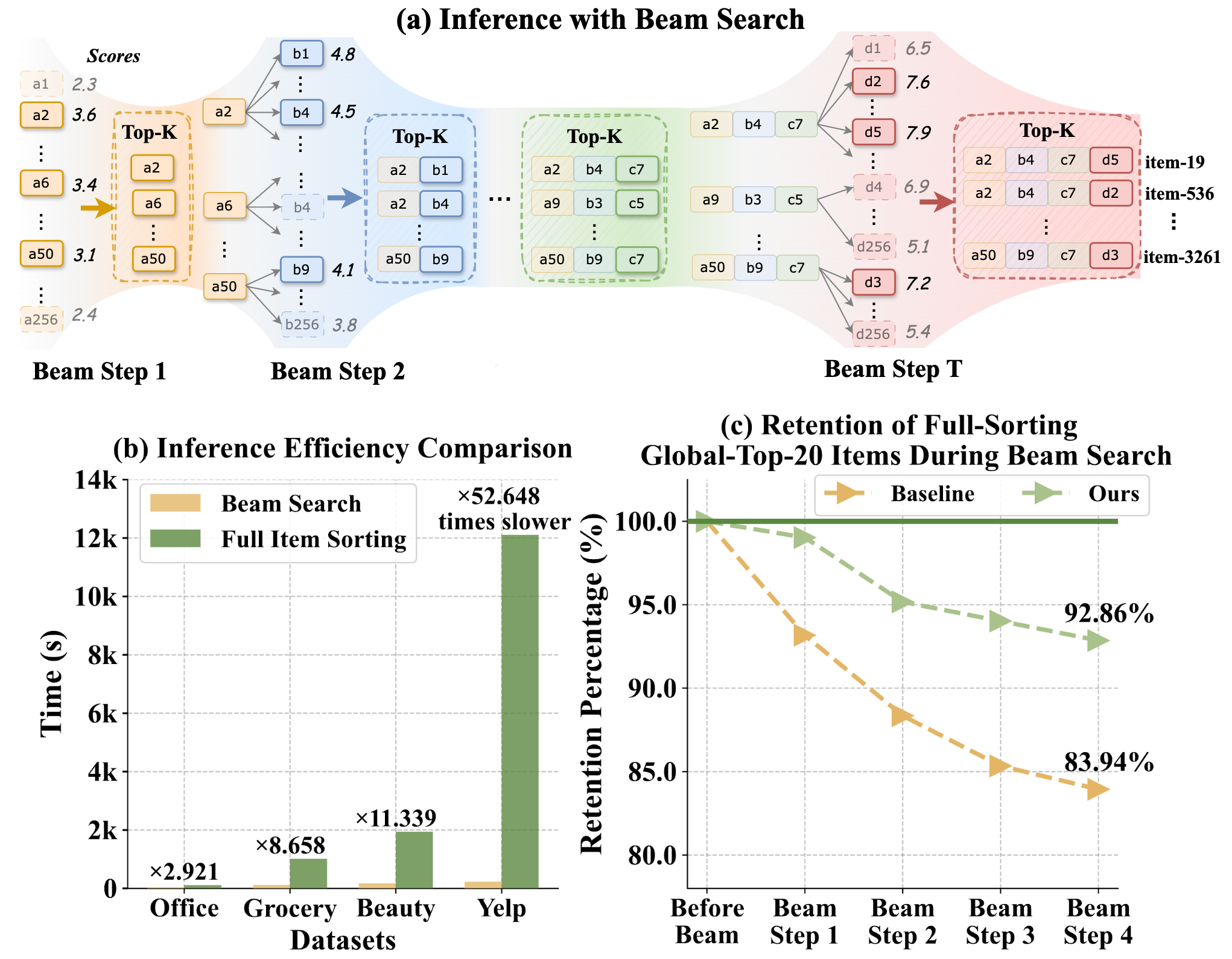}
    \caption{Analysis of Beam Search. (a) Illustration of the beam search inference process. (b) Efficiency comparison: Beam Search vs. Full Sorting. (c) Retention of global Top-20 items: Our method (green) effectively prevents the early discarding of optimal items observed in the baseline (yellow).}
    \label{fig:observation}
\end{figure}

\section{Introduction}

% 1.
In recommender systems (RSs), modeling the sequential patterns of user behavior is crucial for accurately predicting future interactions of users. This problem setting is commonly referred to as sequential recommendation~\cite{kang2018sasrec, hidasi2015gru4rec, sun2019bert4rec, tang2018Caser}. 
Inspired by the recent success of generative language models~\cite{yang2023AutoSearchIndexer, sun2023GenRetrie, zeng2022glm, gpt42023technical}, an emerging line of research has explored the generative paradigm in recommender systems~\cite{rajput2023tiger, wang2024letter, liu2025ETEGRec, zhou2025onerecv2}.
Unlike traditional discriminative methods, generative approaches tokenize each item into multiple discrete tokens and train models to autoregressively generate the target token sequence conditioned on the user’s interaction history.

% 2.
Beam search serves as the mainstream decoding strategy in generative recommendation~\cite{rajput2023tiger, wang2024letter, renzhaochun2025constrained, liu2025ETEGRec, zhou2025onerecv2}, driven by the dual advantages of \textbf{relevance-oriented Top-$K$ ranking} and \textbf{inference efficiency}.
Unlike the stochastic sampling-based decoding strategies widely used in large language models (LLMs) to enhance diversity~\cite{gpt42023technical, dubey2024llama}, generative recommendation prioritizes the likelihood of relevant items to ensure accurate retrieval, which naturally aligns with beam search (as illustrated in \autoref{fig:observation}(a)).
Moreover, directly computing likelihoods over the entire item corpus (\textit{Full Item Sorting}) incurs prohibitive computational costs as the corpus scales.
Beam search serves as an efficient approximation by retaining only the Top-$K$ high-probability prefixes at each decoding step (as illustrated in \autoref{fig:observation}(a)).
As shown in \autoref{fig:observation}(b), beam search achieves orders-of-magnitude faster inference than \textit{Full Item Sorting}.

% 3. Motivation
Although beam search offers substantial efficiency gains, it introduces a fundamental \textbf{training-inference inconsistency}, arising from the \textbf{teacher-forcing} training paradigm and the \textbf{pruning-based} inference process.
Specifically, standard training assumes that the ground-truth prefix is always available, while beam search aggressively prunes low-scoring prefixes during inference.
As a result, the model is not explicitly optimized to retain globally relevant items under beam constraints: an item with high overall likelihood may still be prematurely discarded because its early-stage prefixes fail to rank in the Top-$K$.
As empirically verified in \autoref{fig:observation}(c), some items ranked in the global Top-20 by \textit{Full-Item Sorting} are eliminated during intermediate beam search steps.

% Ours
To address this issue, we conduct a formal analysis comparing beam search with full-space ranking, showing that beam search introduces additional prefix-level ranking constraints, which are absent from standard training objectives.
To mitigate this problem, we propose the \textbf{Adaptive Prefix-Aware Optimization (\ourmethod{})} framework, which introduces prefix-level optimization losses during training to better align the model with its inference behavior.
While inference-side solutions may alleviate this issue, they often introduce additional serving latency. Therefore, instead of modifying the beam-search pipeline, we focus on a training-side solution that aligns optimization with beam-search inference without increasing inference cost.
In developing this framework, we resolve two key challenges: how to design effective prefix-aware objectives, and how to adaptively prioritize different prefixes during optimization.
For the first challenge, motivated by the ranking nature of beam search, we design prefix-level objectives that explicitly model the progressive pruning behavior of beam search. Specifically, the model is encouraged to maintain high-quality prefixes through pointwise and pairwise ranking optimization.
For the second challenge, we observe that the effectiveness of beam search largely depends on whether suboptimal prefixes can still preserve the correct candidates during the decoding process.
Accordingly, we propose adaptive worst-prefix optimization, which focuses the optimization on the currently weakest prefix during training.

Furthermore, we provide a theoretical analysis, including a lower bound derivation and a discussion on time complexity. This analysis demonstrates that the proposed framework effectively optimizes evaluation metrics under the beam-search strategy, while maintaining a complexity comparable to existing baselines.
% Experiments
Extensive experiments on public datasets with two representative backbones demonstrate that our method consistently improves recommendation performance under beam-search decoding. In-depth analysis further reveals consistent gains at every decoding step (prefix), effectively mitigating the training–inference inconsistency.
Moreover, online A/B testing shows that APAO-Pointwise improves pCTR by 0.9\% in a real-world industrial recommendation system.

The main contributions of this work are summarized as follows:

\begin{itemize}[leftmargin=*]
    % Observation. 
    \item To the best of our knowledge, this work presents the first investigation of the \textbf{training-inference inconsistency} in generative recommendation. We identify the prefix-level misalignment inherent in existing approaches and provide a principled solution.
    % Method.
    \item We propose \textbf{Adaptive Prefix-Aware Optimization} (\ourmethod{}), an adaptive optimization framework that incorporates prefix-level objectives to better account for beam-search inference.
    We further provide a theoretical analysis demonstrating its effectiveness and computational efficiency.
    % Results.
    \item Extensive experiments on diverse backbones demonstrate the effectiveness of \ourmethod{} in mitigating the training-inference inconsistency. Moreover, online deployment achieves +0.9\% pCTR in a real-world industrial system.
\end{itemize}

\section{Related Work}
\subsection{Generative Recommendation}
Generative recommendation (GR) has emerged as a new paradigm for recommender systems, where items are represented by multiple discrete tokens instead of a single ID, improving parameter efficiency and generalization.
Existing GR studies mainly involve two stages: item tokenization and sequence modeling over discrete tokens.
For tokenization, prior methods include rule-based approaches~\cite{hua2023index} and learning-based approaches~\cite{rajput2023tiger, wang2024letter, liu2025ETEGRec}, among which VQ-based methods (e.g., RQ-VAE~\cite{lee2022rqvae}) are dominant.
Recent studies further improve tokenization through richer input signals~\cite{liu2024multi, zhai2025multimodal}, recommendation-aware objectives~\cite{liu2025ETEGRec}, and context-adaptive representations~\cite{houactionpiece}.
For sequence modeling, most methods adopt autoregressive next-token prediction conditioned on user interaction histories.
Recent efforts mainly focus on improving inference efficiency, stability, and scalability through enhanced decoding strategies and model designs~\cite{lin-efficient, lin2025order, yang2025earn}.

Unlike prior studies, our work focuses on the training--inference inconsistency introduced by beam search decoding, which has not been systematically explored in existing GR research.

% Optimization
\subsection{Alignment and Ranking Optimization in Generative Models}

In LLMs, preference alignment methods such as Reinforcement Learning from Human Feedback (RLHF)~\cite{christiano2017deep, ouyang2022training} align model outputs with human preferences.
To improve stability and efficiency, Direct Preference Optimization (DPO)~\cite{rafailov2023DPO} and its extensions~\cite{azar2024general, ethayarajh2024kto} replace reinforcement learning with closed-form optimization.
% genrec
Several studies~\cite{chen2024SDPO, bai2024DMPO, wang2025msl} refine GR training objectives to better align optimization with recommendation goals.
Among them, preference-based methods such as S-DPO~\cite{chen2024SDPO} and DMPO~\cite{bai2024DMPO} introduce ranking optimization stages to capture relative order information \textbf{at the whole-item or complete-text level}.
Furthermore, MSL~\cite{wang2025msl} modifies the cross-entropy loss by masking invalid tokens in the softmax normalization to reduce noise from uninformative tokens.

Different from these studies that focus on coarse-grained item-level optimization, our work identifies a critical gap: the \textbf{prefix-level} training–inference inconsistency.
Due to the sequential nature of autoregressive decoding, relying solely on full-sequence supervision neglects errors accumulated at intermediate decoding steps.
We address this by introducing training objectives aligned with the inference process, which explicitly improve ranking capabilities across all decoding steps, from partial prefixes to complete sequences.

\section{Preliminaries}
In this section, we first formulate the training objective and inference process of Generative Recommendation (GR). 
Subsequently, we analyze the discrepancy between beam search decoding and full-space ranking to elucidate the fundamental training–inference inconsistency.

\subsection{Training Objective of GR}

In the sequential recommendation scenario, a user’s interaction history is denoted as $x = \{y_1, y_2, \ldots, y_{i-1}\}$, representing a sequence of past interactions.
In generative recommendation, the target item $y_i$ is tokenized into a sequence of $T$ discrete tokens, $y_i = \{y_i^1, y_i^2, \ldots, y_i^T\}$, using rule-based or vector quantization methods (e.g., RQ-VAE~\cite{rajput2023tiger}).
We denote $y_i^t$ as the $t$-th token and $y_i^{<t} = \{y_i^1, \ldots, y_i^{t-1}\}$ as its prefix subsequence.
The learning objective of GR is to estimate the conditional probability distribution of the next token given both the user history and the previously generated prefix tokens:
\begin{equation}
P(y_i^t \mid y_i^{<t}, x).
\label{equ:conditional_prob}
\end{equation}

By autoregressively generating tokens until step $T$, the model constructs the complete item $y_i = \{y_i^{1:T}\}$.

Analogous to the training of large language models~\cite{ouyang2022training, gpt42023technical, dubey2024llama}, GR models are typically optimized using the \textbf{cross-entropy (CE) loss} under a \textbf{teacher-forcing} paradigm.
% Specifically, at each step $t$, the model $f_\theta$ predicts the logits $z_t$ for the next token, conditioned on the user history $x$ and the \textbf{ground-truth prefix} $y_i^{<t}$:
% \begin{equation}
% z_i^t = f_\theta(y_i^t \mid y_i^{<t}, x),
% \end{equation}
Specifically, at each step $t$, the model $f_\theta$ predicts a logit distribution over the token vocabulary conditioned on the user history $x$ and the \textbf{ground-truth prefix} $y_i^{<t}$.
Let $z_{i,v}^t$ denote the logit assigned to token $v \in V$:
\begin{equation}
z_{i,v}^t = f_\theta(v \mid y_i^{<t}, x).
\end{equation}
where $f_\theta$ is parameterized by $\theta$. The probability assigned to the ground-truth token $y_i^t$ is computed via the softmax function over the entire token vocabulary $V$:
\begin{equation}\label{equ:s_i^t}
s_i^t = \log \frac{e^{z_{i,y_i^t}^t}}{\sum\limits_{j \in V} e^{z_{i,j}^t}}.
\end{equation}

Given a target item represented by a sequence of $T$ tokens, the CE loss is defined as
\begin{equation}\label{equ:ce_loss}
% \mathcal{L}_{CE} = -\frac{1}{T}\sum_{t=1}^{T} s_i^t 
% = -\frac{1}{T}\sum_{t=1}^{T} \log \frac{e^{z_i^t}}{\sum\limits_{j \in V} e^{z_j^t}}.
\mathcal{L}_{CE} =
-\frac{1}{T}\sum_{t=1}^{T}
\log
\frac{
e^{z_{i,y_i^t}^t}
}{
\sum\limits_{j \in V} e^{z_{i,j}^t}
}.
\end{equation}
This objective maximizes the likelihood of the ground-truth tokens while suppressing competing tokens in the vocabulary.

\subsection{Inference Process of GR}
% beam search
During inference, beam search~\cite{renzhaochun2025constrained} is widely adopted in GR for its efficiency, emphasizing high-likelihood relevant items while generating multiple candidate sequences in a single decoding process.

Given a beam size of $K$, at each decoding step $t$, the model maintains up to $K$ candidate sequences (beams) denoted as $\mathcal{B}_{t-1} = \{y^{<t,(1)}, y^{<t,(2)}, \ldots, y^{<t,(K)}\}$ associated with their cumulative scores
$\{S^{(1)}_{t-1}, S^{(2)}_{t-1}, \ldots, S^{(K)}_{t-1}\}$.

For each beam $k$, the probability of extending it with a candidate token $v \in V$ is given by:
\begin{equation}
P(v \mid y^{<t,(k)}, x) = \frac{e^{f_\theta(v \mid y^{<t,(k)}, x)}}{\sum\limits_{j \in V} e^{f_\theta(j \mid y^{<t,(k)}, x)}}.
\end{equation}

The extended hypothesis $\{y^{<t,(k)} \oplus v\}$ is assigned a new cumulative score:
\begin{equation}
S^{(k)}_t(v) = S^{(k)}_{t-1} + \log P(v \mid y^{<t,(k)}, x).
\end{equation}

Across all beams and candidate tokens, the algorithm selects \textbf{the top-$K$ extensions} with the highest cumulative scores:
\begin{equation}\label{eq:beam_topk}
\mathcal{B}_t = \{ y^{<t,(k)} \oplus v: (k, v) \in 
\operatorname{TopK}\!\big( \{ S^{(k)}_t(v) \mid 1 \leq k \leq K, v \in V \} \big)\}.
\end{equation}

This process is repeated until $T$ tokens are generated, resulting in $K$ complete item hypotheses $\{y^{(1)}, y^{(2)}, \ldots, y^{(K)}\}$ with scores $\{S_T^{(1)}, S_T^{(2)}, \ldots, S_T^{(K)}\}$.  
For recommendation, the top-$K$ items ranked by their cumulative scores are returned as the final results.

\subsection{Training–Inference Inconsistency Analysis}
\label{sec: analysis_inconsistency}

In this subsection, we formally analyze the training–inference inconsistency. Taking \textbf{Recall@K} as a representative metric, we contrast the distinct conditions required for a target item $y$ to be successfully recalled in the ideal full-space ranking versus the practical beam search decoding.

\vspace{0.1cm}
\noindent \textbf{Full-Space Ranking (Global View).}
Ideally, if we compute scores for all items (\textit{Full-Sorting}), the target item $y$ is successfully recalled if its total score ranks within the top-$K$ of the entire item space $\mathcal{Y}$.
Let $S(y|x) = \sum_{t=1}^T \log P(y^t | y^{<t}, x)$ be the total score derived from Eq.~(\ref{equ:ce_loss}). The success condition is:
\begin{equation}\label{eq:full_recall}
    \mathbb{I}_{\text{Full}}(y) = \mathbb{I}\left( \operatorname{rank}_{\mathcal{Y}}(S(y|x)) \le K \right).
\end{equation}
Note that Eq.~(\ref{eq:full_recall}) depends only on the \textbf{final cumulative summation}. Even if an intermediate prefix has a low probability, the item can still be recalled if subsequent tokens compensate for it.

\vspace{0.1cm}
\noindent \textbf{Beam Search (Local Constraints).}
In contrast, beam search imposes a strict \textbf{step-wise ranking constraint}.
At each decoding step $t$, the prefix $y^{1:t}$ must compete with other candidates. Let $\operatorname{rank}_t(y^{1:t})$ denote the rank of the prefix $y^{1:t}$ among the current candidates based on the cumulative score up to step $t$.
For the final item $y$ to be generated, its partial sequence must rank within the top-$K$ at every single step $t$.
The success condition thus becomes a chain of intersection constraints:
\begin{equation}\label{eq:beam_recall}
    \mathbb{I}_{\text{Beam}}(y) = \prod_{t=1}^T \mathbb{I}\left( \operatorname{rank}_{t}(S_t(y^{1:t})) \le K \right).
\end{equation}
Here, $S_t(y^{1:t}) = \sum_{j=1}^t \log P(y^j | y^{<j}, x)$ is the partial cumulative score.
Unlike Eq.~(\ref{eq:full_recall}), Eq.~(\ref{eq:beam_recall}) implies that a single ranking failure at any intermediate step (i.e., $\operatorname{rank}_t > K$) results in the permanent loss of the item \textbf{($\mathbb{I}_{\text{Beam}} = 0$)}, regardless of its potential final score.

\vspace{0.1cm}
\noindent \textbf{The Inconsistency.}
Comparing Eq.~(\ref{eq:full_recall}) and Eq.~(\ref{eq:beam_recall}) reveals the fundamental conflict:
The CE loss (Eq.~\ref{equ:ce_loss}) optimizes the \textbf{average} token likelihood (allowing trade-offs), whereas beam search requires satisfying \textbf{all step-wise ranking constraints} and is intolerant to failures at intermediate steps.
Since the model is trained with ground-truth prefixes (teacher-forcing) and never penalized for "falling out of the beam," it fails to learn robustness against beam pruning, leading to the premature discarding of relevant items.
\section{APAO: Adaptive Prefix-Aware Optimization}

In this section, we present our \textbf{Adaptive Prefix-Aware Optimization (\ourmethod{})} framework, addressing the training–inference inconsistency identified in~\autoref{sec: analysis_inconsistency}.
% 4.1
We first provide an overview of the \ourmethod{} framework, which explicitly integrates prefix-level optimization objectives to align the training process with beam search inference.
%4.2 4.3
Subsequently, we detail our technical solutions to two core challenges: constructing an effective prefix-aware loss function (\autoref{method: prefix-aware loss}) and adaptively determining the relative importance of different prefixes (\autoref{method: adaptive weight}).
% 4.4
Finally, we provide theoretical analyses of \ourmethod{}, including a lower bound derivation and time complexity discussion (\autoref{sec: theoretical_analysis}).

\subsection{Prefix-Aware Optimization}
To mitigate the aforementioned training-inference inconsistency, it is essential to incorporate prefix-level optimization objectives into the training process. 
The core idea is to encourage the model to maintain high probabilities not only for the final token sequence but also for all valid prefixes that are required to survive during beam decoding.

Formally, we define a prefix-aware optimization objective that aggregates losses from different prefix lengths:
\begin{equation}
    \mathcal{L}_{\text{prefix}} \;=\; \sum_{m=1}^{T} w_m \, \mathcal{L}_{m},
\end{equation}

where $\mathcal{L}_{m}$ denotes the loss corresponding to the $m$-th prefix and $w_m \ge 0$ controls the contribution of each prefix length $m$ to the final loss.

We incorporate prefix-level optimization via a unified one-stage training framework. Specifically, we formulate the final objective as a weighted combination of the original CE loss and the proposed prefix-aware loss:
\begin{equation}\label{equ:overall_loss}
    \mathcal{L}_\text{unified}
=
\mathcal{L}_{\mathrm{CE}}
+
\beta \sum_{m=1}^{T} w_m \mathcal{L}_{m},
\end{equation}
where $\beta \ge 0$ is a hyperparameter that controls the trade-off between the token-level likelihood (CE) and the prefix-level ranking constraints.

With this prefix-aware training paradigm established, the remaining questions are what specific form of prefix-aware loss $\mathcal{L}_{m}$ to use (Section~\ref{method: prefix-aware loss}) and how to determine the training weight $w_m$ for each prefix (Section~\ref{method: adaptive weight}).

\subsection{Prefix-Aware Loss: Pointwise and Pairwise}
\label{method: prefix-aware loss}

Fundamentally, prefix optimization can be formulated as a ranking problem. While ranking objectives are typically categorized into pointwise, pairwise, and listwise approaches, listwise methods generally incur substantial computational costs. Consequently, to ensure training efficiency, this work focuses on pointwise and pairwise prefix-aware losses.

\vspace{0.1cm}
\noindent \textbf{Prefix-aware Pointwise Loss.} Analogous to the standard CE loss defined over the full sequence (Eq.~\ref{equ:ce_loss}), we define the pointwise loss for a specific prefix of length $m$ as:
\begin{equation}\label{equ:point_loss_def}
\begin{aligned}
\mathcal{L}_\mathrm{point}{(m)}
=
-\frac{1}{m}\sum_{t=1}^{m}
\log
\frac{
e^{z_{i,y_i^t}^t}
}{
\sum\limits_{j \in V} e^{z_{i,j}^t}
},
\quad m \in \{1, \dots, T\},
\end{aligned}
\end{equation}
where $z_{i,j}^t$ denotes the logit assigned to token $j \in V$ at step $t$.

The pointwise prefix-aware loss can be interpreted as assigning different optimization emphasis across decoding positions, thereby encouraging stronger prefix-level robustness during beam search.

\vspace{0.1cm}
\noindent \textbf{Prefix-aware Pairwise Loss.} 
To explicitly distinguish the ground-truth path from incorrect paths during decoding, we introduce a pairwise ranking objective, optimizing the \textbf{relative ranking} between the positive prefix and negative prefixes at each step.

For a target positive item $y_+$, we consider its prefix of length $m$, denoted as $y_+^{1:m}$.
To construct the corresponding negative candidates, we sample a set of negative items $\mathcal{N}$ from the corpus and truncate them to the same length $m$.
This yields a set of negative prefixes $\{ y^{1:m}_{j,-} \mid j \in \mathcal{N} \}$, enabling prefix-level comparison under the same decoding depth.

We define the cumulative score for a prefix of length $m$ as the sum of token-level log-probabilities:
\begin{equation}
    S_i^m = \sum_{t=1}^m s_i^t,
\end{equation}
where $s_i^t$ is the token-level score given in Eq.~\eqref{equ:s_i^t}.

Based on the prefix score, the prefix-aware pairwise loss compares the positive item 
against negatives under the same prefix length:
\begin{equation}\label{equ:pair_loss_def}
    \mathcal{L}_{\mathrm{pair}}(m) 
    = -\log \sigma \left(- \log \sum_{j \in \mathcal{N}} \exp\left(S_{j,-}^m - S_{i,+}^m \right)\right),
    \quad m \in \{1, \dots, T\},
\end{equation}
where $S_{i,+}^m$ denotes the cumulative score of the positive prefix, and $S_{j,-}^m$ denotes the cumulative score of the $j$-th negative prefix.

It is worth noting that while our pairwise loss appears structurally similar to S-DPO~\cite{chen2024SDPO}, they differ fundamentally in their optimization granularity, as detailed in~\autoref{appendix: discussion}.

\vspace{0.1cm}
\noindent \textbf{Comparison}. The pointwise loss is more efficient since it does not require negative sampling for positive prefixes; however, its drawback is that it focuses more on data distribution cloning rather than modeling relative ranking relationships, which are crucial for prefix optimization.
In contrast, the pairwise loss better captures relative ranking relationships, but comes at the cost of lower computational efficiency.
We conduct detailed analyses of both methods in the experimental section.

\subsection{Adaptive Worst-prefix Optimization}
\label{method: adaptive weight}

The remaining challenge lies in how to determine the appropriate weights $\{w_m\}_{m=1}^T$ for prefixes of different lengths.
A naive solution is to treat $w$ as a set of tunable hyperparameters; however, this approach becomes impractical due to the large number of prefixes and the resulting combinatorial tuning cost.
To address this limitation, we propose an \textbf{adaptive weighting mechanism} that dynamically adjusts the weights of each prefix during training.

Notably, Eq.~(\ref{eq:beam_recall}) shows that if any single prefix fails to survive beam search, the entire decoding path is eliminated. Therefore, the overall decoding process is effectively bottlenecked by the weakest prefix, and the optimization should emphasize the prefix with the currently worst performance.
A direct formulation is to minimize the worst-case loss (Hard-Max):
\begin{equation}
    \mathcal{L}_{\mathrm{worst}} \;=\; \max_{m \in \{1, \dots, T\}} \mathcal{L}_{m},
\end{equation}
where $\mathcal{L}_m$ can correspond to either the pointwise or pairwise prefix-aware loss.

However, directly identifying the worst prefix based on instantaneous mini-batch losses may lead to instability.
To mitigate this issue, inspired by~\cite{piratla2022focus}, we introduce a soft weighting vector $\mathbf{w} \in \triangle_T$ that smooths the prefix selection process and constrains its variation across consecutive steps.
The weights are updated by solving the following optimization problem:
\begin{equation}\label{eq:optimization_problem}
    \mathbf{w}^{(\tau+1)}
=
\arg \max_{\mathbf{w} \in \triangle_T}
\sum_m w_m \mathcal{L}^{(\tau+1)}_m
-
\frac{1}{\eta}
KL(\mathbf{w}\,\|\,\mathbf{w}^{(\tau)}),
\end{equation}
where $KL(\mathbf{w}\,\|\,\mathbf{w}^{(\tau)})$ represents the Kullback-Leibler divergence between the current and previous weight vectors, and $\triangle_T$ denotes the T-dimensional probability simplex.
The hyperparameter $\eta$ controls the degree of variation in $\mathbf{w}$; smaller values of $\eta$ yield smoother and more stable updates.

The above problem admits a closed-form solution derived from the Karush–Kuhn–Tucker (KKT) optimality conditions. The detailed proof can be found in~\autoref{appendix: kkt}:
\begin{equation}\label{eq:weight_update}
w^{(\tau+1)}_m
=
\frac{
w^{(\tau)}_m \cdot \exp(\eta \mathcal{L}_m)
}{
\sum_{j=1}^T
w^{(\tau)}_j \cdot \exp(\eta \mathcal{L}_j)
}.
\end{equation}

Intuitively, the adaptive weighting scheme assigns larger weights to prefixes with higher losses, allowing the model to dynamically focus on the most difficult prefixes during training.

By integrating the adaptive weighting mechanism with the prefix-aware loss functions in~Eq.~\ref{equ:overall_loss}, we derive the unified training objectives for the two variants of \ourmethod{}.

\noindent 1. Pointwise Variant:
\begin{equation}\label{equ:final_pointwise}
    \mathcal{L}_{\text{total}} = \mathcal{L}_{\text{CE}} + \beta \sum_{m=1}^{T} w_m \mathcal{L}_{\text{point}}(m),
\end{equation}
where $\mathcal{L}_{\text{point}}(m)$ is the pointwise loss defined in Eq.~(\ref{equ:point_loss_def}).

\noindent 2. Pairwise Variant:
\begin{equation}\label{equ:final_pairwise}
    \mathcal{L}_{\text{total}} = \mathcal{L}_{\text{CE}} + \beta \sum_{m=1}^{T} w_m \mathcal{L}_{\text{pair}}(m),
\end{equation}
where $\mathcal{L}_{\text{pair}}(m)$ is the pairwise ranking loss defined in Eq.~(\ref{equ:pair_loss_def}).

In both cases, $\beta \ge 0$ regulates the strength of prefix-level optimization, ensuring that the model learns to satisfy beam search constraints while maintaining generation quality. The weights $\{w_m\}$ are updated iteratively via Eq.~(\ref{eq:weight_update}).
The pseudocode of the complete learning algorithm is provided in~\textbf{\autoref{appendix:learning_algo}}.

\subsection{Theoretical Analysis}
\label{sec: theoretical_analysis}
In this section, we analyze our method from two perspectives: effectiveness and efficiency. We demonstrate that: (i) optimizing the prefix-aware loss effectively optimizes a lower bound of the ranking metric in the beam search setting; and (ii) incorporating prefix-level alignment does not increase the theoretical time complexity compared with other common baseline methods.

\subsubsection{\textbf{Lower-Bound Analysis.}}

\begin{theorem}[Optimization Consistency]\label{thm:consistency}
Optimizing the prefix-aware loss (Eq.~(\ref{equ:overall_loss})) essentially optimizes a lower bound of the ranking metric (e.g., recall) under beam search. It thereby serves as a principled surrogate objective for ranking optimization under beam search.
\end{theorem}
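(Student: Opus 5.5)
We start from the beam-search success indicator in Eq.~(\ref{eq:beam_recall}), $\mathbb{I}_{\text{Beam}}(y)=\prod_{t=1}^T \mathbb{I}(\operatorname{rank}_t(S_t(y^{1:t}))\le K)$, and bound it below by a single expression in which every prefix appears. Write $E_t$ for the failure event at step $t$, that is, $\operatorname{rank}_t>K$. The union bound then gives
\[
\mathbb{I}_{\text{Beam}}(y)\;\ge\;1-\sum_{t=1}^T \mathbb{I}(E_t)\;\ge\;1-T\max_{t}\mathbb{I}(E_t).
\]
Taking expectations over users and targets, Recall@K under beam search is at least $1-\sum_t \Pr(E_t)$, or equivalently $1-T\max_t\Pr(E_t)$. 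It therefore suffices to upper bound each step-wise failure indicator by the prefix-aware loss $\mathcal{L}_t$.

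\textbf{Pairwise case.} If the positive prefix ranks outside the top-$K$ at step $m$, then at least $K$ competing prefixes $j$ satisfy $S^m_{j,-}\ge S^m_{i,+}$. Hence $\sum_{j}\exp(S^m_{j,-}-S^m_{i,+})\ge K\ge 1$. Set $u=-\log\sum_j\exp(S^m_{j,-}-S^m_{i,+})$; then $u\le 0$, so $\sigma(u)\le 1/2$ and $\mathcal{L}_{\mathrm{pair}}(m)=-\log\sigma(u)\ge\log 2$. This gives the pointwise inequality
\[
\mathbb{I}(E_m)\le \mathcal{L}_{\mathrm{pair}}(m)/\log 2.
\]
A subtlety is that the real competitors in the beam are prefixes extending surviving beams, whereas the loss uses the sampled negatives $\mathcal{N}$. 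I will handle this in one of two ways: state the bound with $\mathcal{N}$ taken as the full set of same-length prefixes, so the sampled loss is an unbiased or monotone estimate of it, or note that enlarging $\mathcal{N}$ only increases the log-sum-exp.

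\textbf{Pointwise case.} If the positive prefix is not among the top-$K$ of the (at most $|V|^m$) prefixes, its probability mass satisfies $P(y^{1:m}\mid x)\le 1/(K+1)$, because at least $K$ other prefixes have at least as much mass and all prefix masses sum to at most one. Therefore
\[
m\,\mathcal{L}_{\mathrm{point}}(m)=-S^m\ge\log(K+1),
\]
and so $\mathbb{I}(E_m)\le m\,\mathcal{L}_{\mathrm{point}}(m)/\log(K+1)$.

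\textbf{Combining.} Substituting either bound gives
\[
\text{Recall@K}_{\text{Beam}}\;\ge\;1-c\sum_m \mathcal{L}_m\;\ge\;1-cT\max_m\mathcal{L}_m .
\]
The max is exactly the worst-prefix objective, and the adaptive weights of Eq.~(\ref{eq:weight_update}) approximate it from below through the KL-regularized problem, since $\sum_m w_m\mathcal{L}_m\le\max_m\mathcal{L}_m$ with near-equality for concentrated $\mathbf{w}$. Because $\mathcal{L}_{\mathrm{CE}}\ge 0$ and $\beta>0$, decreasing Eq.~(\ref{equ:overall_loss}) decreases an upper bound on $\sum_m w_m\mathcal{L}_m$. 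This raises the lower bound on beam-search recall, which is the claimed consistency.

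\textbf{Main obstacle.} The hardest step is making the ranking at step $t$ rigorous. The beam's candidate set at step $t$ depends on earlier pruning, so $\operatorname{rank}_t$ is a rank among a data-dependent subset. I will argue that restricting to a subset can only improve the rank, so being in the top-$K$ among all length-$t$ prefixes suffices for survival. This also needs the observation that survival at step $t$ together with a top-$K$ global rank at step $t+1$ implies survival at step $t+1$, which holds because every global top-$K$ prefix at step $t+1$ extends a prefix that was in the top-$K$ at step $t$, given that cumulative log-probabilities are non-increasing in length.
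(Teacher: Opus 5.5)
Your skeleton is the paper's: product of step indicators, union bound, then a per-step bound of the failure indicator by the prefix loss. Your key per-step inequality is different, and better.

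The paper passes through the exponential surrogate $\overline{\iota}(\phi_m)=\exp(-(-\phi_m)_+)$ and then uses $\overline{\iota}(\phi_m)\le\exp(\mathcal{L}_{\mathrm{pair}}(m))$. Since $\exp(\mathcal{L}_{\mathrm{pair}}(m))=1+e^{\phi_m}>1$, its final bound $1-\sum_m\exp(\mathcal{L}_{\mathrm{pair}}(m))$ lies below $1-T\le 0$, so it is vacuous. Your threshold argument says a failure forces the log-sum-exp to be at least $1$, hence $\mathcal{L}_{\mathrm{pair}}(m)\ge\log 2$ and $\mathbb{I}(E_m)\le\mathcal{L}_{\mathrm{pair}}(m)/\log 2$. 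That bound is linear in the loss and vanishes with it, so your recall bound actually has content.

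For the pointwise variant the paper only remarks that CE is ``pairwise against the whole vocabulary''. Your mass-counting argument makes this exact. Writing $S(p)$ for the cumulative log-probability of a prefix $p$, you have $m\,\mathcal{L}_{\mathrm{point}}(m)=-S^m=\log\bigl(1+\sum_{p\neq y^{1:m}}e^{S(p)-S^m}\bigr)$ with $p$ ranging over $V^m$. So the pointwise prefix loss is precisely the pairwise loss with the full negative set.

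The paper also never addresses that $\operatorname{rank}_t$ is computed inside a data-dependent candidate set. Reducing to global top-$K$ events, as you propose, is the right repair.

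Three steps need fixing.

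(i) Your justification at the end, that every global top-$K$ prefix at step $t+1$ extends a top-$K$ prefix at step $t$, is false. With $K=1$, let $a$ have mass $0.4$ concentrated on a single child $a1$, and let $b$ have mass $0.6$ split evenly over six children. Then $a1$ is the global top-$1$ at step $2$, yet $a$ is pruned at step $1$. You do not need this claim. If $y^{1:t}$ survived, then $y^{1:t+1}$ is a candidate, and the candidates form a subset of all length-$(t+1)$ prefixes, so its candidate rank is at most its global rank. By induction, $\mathbb{I}_{\text{Beam}}(y)\ge\prod_t\mathbb{I}(G_t)$, where $G_t$ is the global top-$K$ event. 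The union bound should then be applied to the $G_t^c$, not to the beam-rank events $E_t$.

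(ii) Your remedy (b) for sampled negatives points the wrong way. Enlarging $\mathcal{N}$ increases the log-sum-exp, so the sampled loss is smaller than the full-set loss, and a small sampled loss says nothing about the failure event. It is also not unbiased, since it is a concave function of an unrescaled partial sum. Keep remedy (a) and prove the bound for the full-negative loss. Alternatively, adopt the paper's explicit assumption that a failure is witnessed inside $\mathcal{N}$; your $\log 2$ threshold works under either.

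(iii) In your combining step, $\sum_m w_m\mathcal{L}_m\le\max_m\mathcal{L}_m$ runs in the wrong direction. Your recall bound needs an upper bound on $\sum_m\mathcal{L}_m$, and a small weighted sum does not supply one by itself. Use the fact that the multiplicative update, started from uniform weights, keeps every $w_m>0$. Then $\sum_m\mathcal{L}_m\le\mathcal{L}_{\text{unified}}/(\beta\min_m w_m)$. For the pointwise variant there is a more direct link: $m\,\mathcal{L}_{\mathrm{point}}(m)=-S^m\le-S^T=T\,\mathcal{L}_{\mathrm{CE}}$. This shows the CE term alone already controls a bound of this type. Finally, your pointwise constant depends on $m$, so take $c=T/\log(K+1)$.
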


The proof of this theorem is provided in~\textbf{~\autoref{appendix_theorem1}}.

\subsubsection{\textbf{Time Complexity.}}
\label{sec: Time_Complexity_Analysis}

Regarding the training time complexity, it is straightforward to see that the pointwise variant of our method shares the same per-batch complexity as CE loss, both being $O\left(B\cdot T\cdot \left(d^2+ \left| V\right| \cdot d\right)\right)$, where $B$ denotes the batch size, $V$ denotes the token vocabulary, $T$ is the max sequence length, and $d$ is the hidden size. This is because the prefix-level losses introduced by our method can be computed directly based on the token-level logits, without requiring any additional forward passes. 
For the pairwise version with $N$ negatives, the per-batch training complexity is $O\left(B\cdot \left(N+1\right) T\cdot \left(d^2+ \left| V\right| \cdot d\right)\right)$, which is also consistent with other preference-alignment methods such as S-DPO~\cite{chen2024SDPO}.
In addition, we empirically compare the running time of the methods and found that our approach achieves comparable efficiency to the corresponding baselines, shown in~\autoref{sec: efficiency}.

\section{Experiments}

\begin{table*}[!t]
\caption{
Top-$K$ recommendation results on two generative backbones across four datasets.
\textbf{Bold} and \underline{underlined} numbers indicate the best and second-best results, respectively, while \protect\hl{gray-shaded} cells mark the strongest baselines. "R" and "N" denote Recall and NDCG.
The superscripts * indicate $ p \leq 0.01$ for one-sample t-tests of \ourmethod{} vs. the best baseline (the relative improvements are denoted as \textit{Max Improv.}).
}
  \centering
  \renewcommand\arraystretch{1.1}
  \tabcolsep=0.08cm
  \resizebox{\textwidth}{!}{
  % \begin{tabular}{l|cccc|cccc|cccc|cccc}
  \begin{tabular}{lcccccccccccccccc}
    \toprule
    \multirow{2}{*}{\textbf{Methods}} & \multicolumn{4}{c}{\textbf{Office}} & \multicolumn{4}{c}{\textbf{Grocery}} & \multicolumn{4}{c}{\textbf{Beauty}} & \multicolumn{4}{c}{\textbf{Yelp}}\\
    \cmidrule(r){2-5}\cmidrule(r){6-9}\cmidrule(r){10-13}\cmidrule(r){14-17}
    & \textbf{R@10} & \textbf{R@20} & \textbf{N@10} & \textbf{N@20} & \textbf{R@10} & \textbf{R@20} & \textbf{N@10} & \textbf{N@20} & \textbf{R@10} & \textbf{R@20} & \textbf{N@10} & \textbf{N@20} & \textbf{R@10} & \textbf{R@20} & \textbf{N@10} & \textbf{N@20} \\
    \midrule
    \rowcolor[HTML]{e2edf8}
    \multicolumn{17}{c}{\textbf{\textit{TIGER (Encoder-Decoder)}}}\\
    CE & 0.0608 & 0.1013 & 0.0292 & 0.0394 & \cellcolor{gray!20}{0.0775} & \cellcolor{gray!20}{0.1182} & \cellcolor{gray!20}{0.0403} & \cellcolor{gray!20}{0.0506} & \cellcolor{gray!20}0.0611 & 0.0928 & \cellcolor{gray!20}0.0318 & \cellcolor{gray!20}0.0398 & 0.0384 & 0.0585 & 0.0201 & 0.0252 \\
    MSL & \cellcolor{gray!20}0.0638 & 0.0975 & 0.0319 & 0.0404 & 0.0762 & 0.1129 & 0.0401 & 0.0493 & 0.0578 & 0.0873 & 0.0289 & 0.0364 & 0.0384 & \cellcolor{gray!20}0.0586 & \cellcolor{gray!20}0.0206 & \cellcolor{gray!20}{0.0257} \\
    CE$\rightarrow$DPO & 0.0565 & 0.0922 & 0.0275 & 0.0365 & 0.0629 & 0.0983 & 0.0322 & 0.0410 & 0.0503 & 0.0767 & 0.0246 & 0.0312 & 0.0274 & 0.0406 & 0.0141 & 0.0174 \\
    CE$\rightarrow$DMPO & 0.0542 & 0.0924 & 0.0258 & 0.0353 & 0.0721 & 0.1109 & 0.0370 & 0.0468 & 0.0570 & 0.0892 & 0.0294 & 0.0375 & 0.0336 & 0.0542 & 0.0171 & 0.0223 \\
    CE$\rightarrow$S-DPO & 0.0628 & \cellcolor{gray!20}{0.1078} & \cellcolor{gray!20}0.0321 & \cellcolor{gray!20}0.0435 & 0.0770 & 0.1157 & 0.0395 & 0.0492 & 0.0606 & \cellcolor{gray!20}0.0929 & 0.0310 & 0.0392 & \cellcolor{gray!20}0.0389 & 0.0581 & 0.0202 & 0.0251 \\
    % \hline
    \cmidrule{1-17}
    \rowcolor[HTML]{e2edf8}
    \textbf{\ourmethod{}-Pointwise} & \underline{0.0667}$^{*}$ & \textbf{0.1105}$^{*}$ & \textbf{0.0337}$^{*}$ & \textbf{0.0447}$^{*}$ & \textbf{0.0815}$^{*}$ & \underline{0.1226}$^{*}$ & \textbf{0.0419}$^{*}$ & \underline{0.0523} & \underline{0.0637}$^{*}$ & \underline{0.0940} & \underline{0.0332}$^{*}$ & \underline{0.0409} & \underline{0.0411}$^{*}$ & \underline{0.0614}$^{*}$ & \underline{0.0214}$^{*}$ & \underline{0.0265} \\
    \rowcolor[HTML]{e2edf8}
    \textbf{\ourmethod{}-Pairwise} & \textbf{0.0671}$^{*}$ & \underline{0.1095} & \textbf{0.0337}$^{*}$ & \underline{0.0444} & \underline{0.0811}$^{*}$ & \textbf{0.1235}$^{*}$ & \underline{0.0418}$^{*}$ & \textbf{0.0525}$^{*}$ & \textbf{0.0639}$^{*}$ & \textbf{0.0964}$^{*}$ & \textbf{0.0339}$^{*}$ & \textbf{0.0421}$^{*}$ & \textbf{0.0412}$^{*}$ & \textbf{0.0623}$^{*}$ & \textbf{0.0218}$^{*}$ & \textbf{0.0271}$^{*}$ \\
    % \midrule
    % \cmidrule{2-17}
    \rowcolor[HTML]{e2edf8}
    \textit{\textbf{Max Improv.}} & \textit{\textbf{+5.17\%}} & \textit{\textbf{+2.50\%}} & \textit{\textbf{+4.98\%}} & \textit{\textbf{+2.76\%}} & \textit{\textbf{+5.19\%}} & \textit{\textbf{+4.48\%}} & \textit{\textbf{+3.92\%}} & \textit{\textbf{+3.75\%}} & \textit{\textbf{+4.58\%}} & \textit{\textbf{+3.77\%}} & \textit{\textbf{+6.60\%}} & \textit{\textbf{+5.78\%}} & \textit{\textbf{+5.91\%}} & \textit{\textbf{+6.31\%}} & \textit{\textbf{+5.83\%}} & \textit{\textbf{+5.45\%}} \\
    \midrule
    \rowcolor[HTML]{e5f1dd}
    \multicolumn{17}{c}{\textbf{\textit{Llama (Decoder Only)}}}\\
    % \cmidrule{1-17}
    CE & 0.0469 & 0.0732 & 0.0236 & 0.0302 & 0.0647 & \cellcolor{gray!20}{0.0985} & 0.0343 & 0.0428 & \cellcolor{gray!20}{0.0516} & \cellcolor{gray!20}{0.0770} & \cellcolor{gray!20}{0.0274} & \cellcolor{gray!20}{0.0338} & \cellcolor{gray!20}{0.0267} & \cellcolor{gray!20}{0.0398} & \cellcolor{gray!20}{0.0141} & \cellcolor{gray!20}{0.0174} \\
    CE$\rightarrow$DPO & 0.0477 & \cellcolor{gray!20}{0.0799} & 0.0228 & 0.0310 & 0.0620 & 0.0941 & 0.0323 & 0.0404 & 0.0364 & 0.0597 & 0.0172 & 0.0231 & 0.0171 & 0.0253 & 0.0089 & 0.0109 \\
    CE$\rightarrow$DMPO & 0.0442 & 0.0699 & 0.0228 & 0.0293 & 0.0649 & 0.0958 & 0.0353 & 0.0431 & 0.0457 & 0.0714 & 0.0251 & 0.0315 & 0.0251 & 0.0383 & 0.0130 & 0.0163 \\
    CE$\rightarrow$S-DPO & \cellcolor{gray!20}0.0493 & 0.0783 & \cellcolor{gray!20}0.0256 & \cellcolor{gray!20}0.0328 & \cellcolor{gray!20}0.0682 & 0.0975 & \cellcolor{gray!20}0.0371 & \cellcolor{gray!20}0.0444 & 0.0480 & 0.0700 & 0.0262 & 0.0318 & 0.0262 & 0.0372 & 0.0138 & 0.0166 \\
    \cmidrule{1-17}
    \rowcolor[HTML]{e5f1dd}
     \textbf{\ourmethod{}-Pointwise} & \textbf{0.0559}$^{*}$ & \underline{0.0871}$^{*}$ & \underline{0.0265}$^{*}$ & \underline{0.0343}$^{*}$ & \underline{0.0701} & \underline{0.1074}$^{*}$ & \underline{0.0379} & \underline{0.0473}$^{*}$ & \textbf{0.0564}$^{*}$ & \underline{0.0796}$^{*}$ & \textbf{0.0300}$^{*}$ & \underline{0.0359}$^{*}$ & \textbf{0.0289}$^{*}$ & \underline{0.0424}$^{*}$ & \textbf{0.0152}$^{*}$ & \textbf{0.0186}$^{*}$ \\
    \rowcolor[HTML]{e5f1dd}
    \textbf{\ourmethod{}-Pairwise} & \underline{0.0557}$^{*}$ & \textbf{0.0905}$^{*}$ & \textbf{0.0269}$^{*}$ & \textbf{0.0357}$^{*}$ & \textbf{0.0741}$^{*}$ & \textbf{0.1101}$^{*}$ & \textbf{0.0390}$^{*}$ & \textbf{0.0480}$^{*}$ & \underline{0.0548}$^{*}$ & \textbf{0.0835}$^{*}$ & \underline{0.0293}$^{*}$ & \textbf{0.0365}$^{*}$ & \underline{0.0287}$^{*}$ & \textbf{0.0425}$^{*}$ & \underline{0.0149}$^{*}$ & \underline{0.0184}$^{*}$ \\
    \rowcolor[HTML]{e5f1dd}
    \textit{\textbf{Max Improv.}} & \textit{\textbf{+13.39\%}} &	\textit{\textbf{+13.27\%}} &	\textit{\textbf{+5.08\%}} &	\textit{\textbf{+8.84\%}} &	\textit{\textbf{+8.65\%}} &	\textit{\textbf{+11.78\%}} &	\textit{\textbf{+5.12\%}} &	\textit{\textbf{+8.11\%}} &	\textit{\textbf{+9.30\%}} &	\textit{\textbf{+8.44\%}} &	\textit{\textbf{+9.49\%}} &	\textit{\textbf{+7.99\%}} &	\textit{\textbf{+8.24\%}} &	\textit{\textbf{+6.78\%}} &	\textit{\textbf{+7.80\%}} &	\textit{\textbf{+6.90\%}} \\
    \bottomrule
  \end{tabular}
  }
  \label{tab:main_res}
\end{table*}

In this section, we conduct extensive experiments to answer the following research questions:

\begin{itemize}[leftmargin=*]
    \item \textbf{RQ1:} How does our proposed \ourmethod{} perform compared to other state-of-the-art baselines?
    \item \textbf{RQ2:} How robust and effective is \ourmethod{} under different designs, training settings, and inference conditions?
    \item \textbf{RQ3:} Does our method truly bridge the training–inference gap by improving \textbf{prefix-level} predictions during beam search?
    \item \textbf{RQ4:} Is \ourmethod{} computationally efficient compared to baselines?
\end{itemize}

\begin{table}[!h]
\caption{Statistics of Datasets.}
  \centering
  \begin{tabular}{lcccc}
    \toprule
    \textbf{Dataset} & \textbf{\# Users} & \textbf{\# Items} & \textbf{\# Inter.} & \textbf{Density} \\
    \midrule
    Office & 4,905 & 2,420 & 53,258 & 0.45\% \\
    Grocery & 14,681 & 8,713 & 151,254 & 0.12\% \\
    Beauty & 22,363 & 12,101 & 198,502 & 0.07\% \\
    Yelp & 30,431 & 20,033 & 316,354 & 0.05\% \\
    \bottomrule
  \end{tabular}
  \label{tab:dataset}
\end{table}

\subsection{Experimental Settings}

\subsubsection{\textbf{Datasets and metrics}}

Following previous studies~\cite{rajput2023tiger, zheng2024lcrec, wang2024letter, liu2025ETEGRec}, we evaluate our approach on four public datasets of varying scales: \textbf{Office}\footnote{\href{https://jmcauley.ucsd.edu/data/amazon/links.html}{https://jmcauley.ucsd.edu/data/amazon/links.html}}, \textbf{Grocery}$^1$, \textbf{Beauty}$^1$, and \textbf{Yelp}\footnote{\href{https://www.yelp.com/dataset}{https://www.yelp.com/dataset}}.
Following the data processing protocols in previous work~\cite{wang2024letter, zheng2024lcrec}, we apply five-core filtering to discard users and items with fewer than five interactions, and adopt the \textit{leave-one-out} strategy for data splitting.
The statistics of the datasets after preprocessing are summarized in~\autoref{tab:dataset}.
% Evaluation
For evaluation, we employ top-\textit{K} \textit{Recall} (\textbf{R@K}) and \textit{Normalized Discounted Cumulative Gain} (\textbf{N@K})~\cite{jarvelin2002ndcg} with $K\in\{10,20\}$.

\subsubsection{\textbf{Baselines}}

To evaluate the effectiveness of our approach, we compare \ourmethod{} with different improved loss functions for generative recommendation:

\begin{itemize}[leftmargin=*]
\item \textbf{CE}~\cite{radford2018gpt1}, which optimizes the model with the standard cross-entropy loss.
\item \textbf{MSL}~\cite{wang2025msl}, which modifies CE loss by masking out invalid tokens in the softmax normalization.
\item \textbf{DPO}~\cite{rafailov2023DPO}, which aligns model outputs with preferences through a direct preference optimization objective.
\item \textbf{DMPO}~\cite{bai2024DMPO}, which extends DPO to multi-preference alignment by maximizing the positive while minimizing the average likelihood over $N$ sampled negatives.
\item \textbf{S-DPO}~\cite{chen2024SDPO}, which generalizes DPO to a multi-negative loss that weights negatives via softmax.
\end{itemize}

In line with prior studies~\cite{rajput2023tiger, liu2025ETEGRec, zheng2024lcrec}, we conducted experiments on two backbone GR models: TIGER~\cite{rajput2023tiger} (encoder–decoder architecture) and Llama~\cite{dubey2024llama} (decoder-only architecture), both instantiated with comparable parameter sizes ($\approx$0.01B).

\subsubsection{\textbf{Implementation Details}}

% tokenization
For tokenization in generative recommendation, we follow prior studies~\cite{wang2024letter, zheng2024lcrec} and first extract semantic embeddings using \texttt{Llama-3.1-8B-Instruct}~\cite{dubey2024llama}. We then apply a 4-level RQ-K-means quantizer~\cite{zhou2025onerec} to obtain semantic codes for all items.
For sequential recommendation, we follow previous work~\cite{rajput2023tiger, wang2024letter} and truncate user history to the most recent 20 items.
We adopt a learning rate of 5e-4 with 1\% warmup and a cosine decay scheduler. 
AdamW~\cite{loshchilov2017decoupled} is used as the default optimizer.
Training is conducted for up to 200 epochs with early stopping patience set to 20. The batch size is 1024, implemented as 128 samples per device with 8 steps of gradient accumulation.
All models are trained on 1 NVIDIA A100-SXM4-40GB.
For model inference, we use beam search with a beam size of 20.

% hyperparameter
For our method, the weight $\beta$ of the $\mathcal{L}_\text{prefix}$ is tuned over \{0.05, 0.1, 0.2, 0.3, 0.4\}. $\eta$ is tuned over \{5e-6, 1e-5, 3e-5, 5e-5, 1e-4, 5e-4\}.
% Negative Sampling
To ensure fairness, we adopt uniform random negative sampling and fix the number of negative samples to 100 for all methods. Exploring more advanced sampling strategies is left for future work.
All baseline hyperparameters are tuned following suggestions from their original papers.
% gpu & codes
Full configurations, including hyperparameters, model checkpoints, and scripts, are available in our code repository\footnote{\href{https://github.com/yuyq18/APAO}{https://github.com/yuyq18/APAO}} for reproducibility.

\subsection{RQ1: Overall Performance}

\autoref{tab:main_res} presents the top-$K$ recommendation performance of \ourmethod{} and various baselines across four datasets, using two representative generative recommendation backbones: TIGER (encoder–decoder) and Llama (decoder-only). Below are some key observations:

\begin{itemize}[leftmargin=*]
\item \textbf{Overall superiority of \ourmethod{}.}
Across all datasets, backbones, and baselines, \ourmethod{} consistently achieves the best performance on both Recall and NDCG.
By incorporating prefix-aware supervision that aligns the training objective with the beam-search generation process, \ourmethod{} effectively mitigates the training–inference inconsistency and achieves significant performance gains.
Moreover, consistent improvements across the \textit{Encoder–Decoder} and the \textit{Decoder-Only} architectures further demonstrate the generality and robustness of the proposed framework.

\item \textbf{Advantage over preference optimization baselines.}
Compared with recent preference-optimization approaches such as DPO, DMPO, and S-DPO, \ourmethod{} delivers notable and consistent improvements.
Unlike these methods that rely solely on the final generated items for optimization, \ourmethod{} introduces consistent supervision across all prefixes, ensuring that each partial generation step is guided toward optimal beam expansion.

\item \textbf{Effectiveness of both pointwise and pairwise variants.}
Both variants of \ourmethod{} demonstrate robust performance with complementary strengths: the pointwise version excels in stability and efficiency, whereas the pairwise version generally yields superior ranking quality by explicitly modeling token-level preferences. 
This confirms the versatility of our framework across different supervision formulations.

\end{itemize}

\subsection{RQ2: In-depth Analysis}

\begin{figure}[h]
    \centering
    \includegraphics[width=1.0\linewidth]{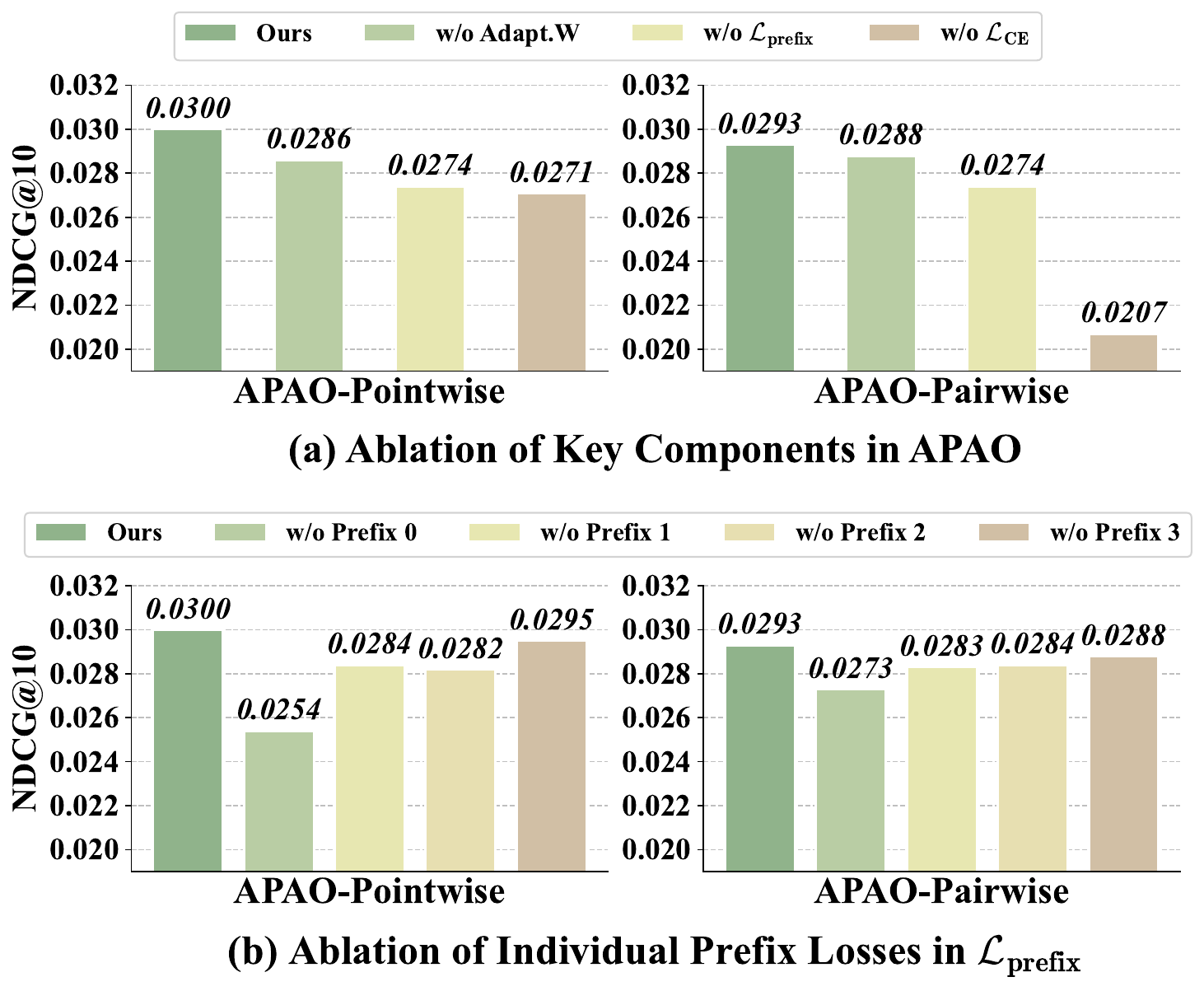}
    \caption{Ablation analysis on the Beauty dataset using the Llama backbone. Similar trends are observed across other datasets.}
    \label{fig:ablation}
\end{figure}

\subsubsection{\textbf{Ablation Study}}
To further examine the effectiveness of individual components in \ourmethod{}, we perform ablation experiments on the Beauty dataset using the Llama backbone.

\begin{itemize}[leftmargin=*]
\item \textbf{Per-Module Ablation.}
\autoref{fig:ablation}(a) reports the impact of removing key components. 
% Overall
We observe that removing any single component leads to a performance drop, which confirms that the improvements of \ourmethod{} arise from the joint contribution of all modules.
In particular, the absence of adaptive weighting reduces the model’s ability to balance informative supervision, while excluding $\mathcal{L}_\text{CE}$ or $\mathcal{L}_\text{prefix}$ significantly weakens the overall optimization.

\item \textbf{Prefix Ablation.}
\autoref{fig:ablation}(b) evaluates the impact of removing loss supervision at different prefix levels.
% Overall
Results confirm that all prefixes contribute to performance; however, removing early prefixes (e.g., Prefix~0) is significantly more detrimental, and this impact progressively decreases for later prefixes.
This highlights the critical role of early supervision for guiding the beam search to the correct target.

\begin{figure}[h]
    \centering
    \includegraphics[width=1.0\linewidth]{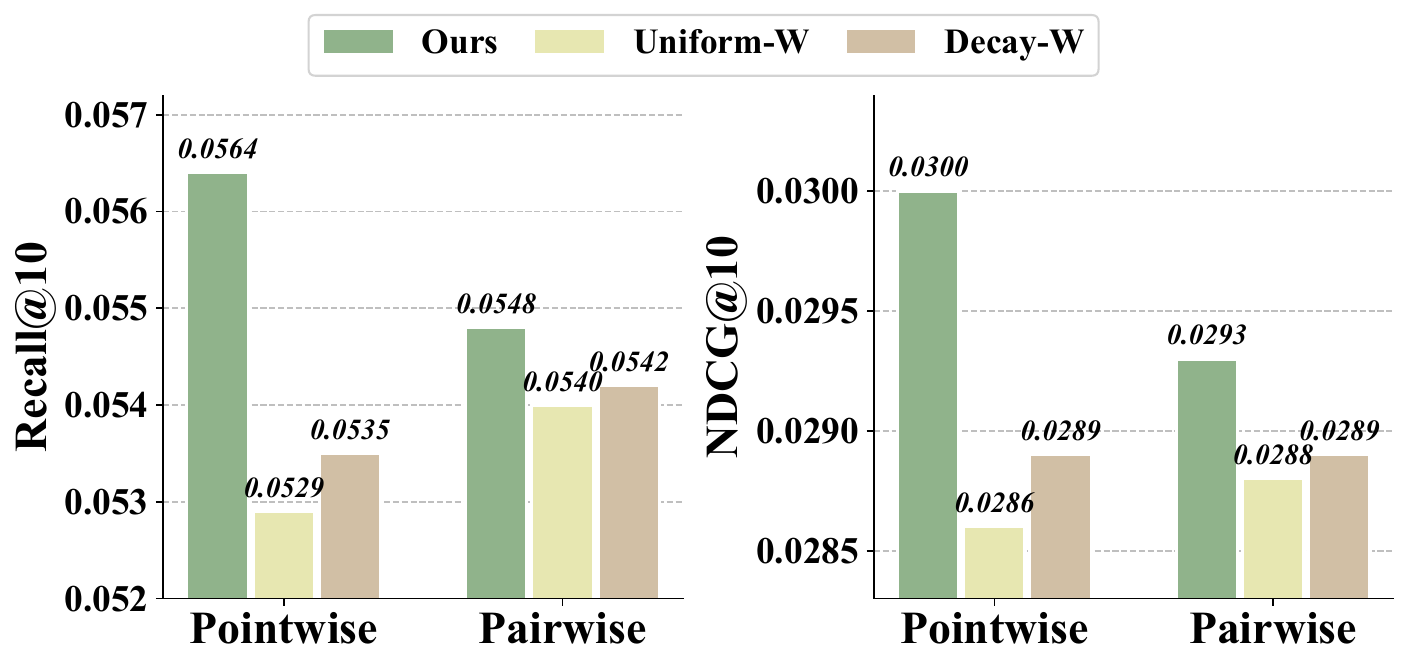}
    \caption{Weighting strategy analysis on the Beauty dataset using the Llama backbone. Similar trends are observed across other datasets.}
    \label{fig:weighting}
\end{figure}

\item \textbf{Weighting Strategy Ablation.}
We further compare different prefix weighting strategies, including uniform weighting and fixed exponentially decaying weighting ($w_t \propto e^{-\lambda t}$).
Notably, the ``w/o Adapt.W'' variant in Fig.~\ref{fig:ablation} is essentially equivalent to uniform weighting over prefixes.
As shown in Fig.~\ref{fig:weighting}, adaptive weighting consistently achieves the best performance for both pointwise and pairwise variants.
Compared with fixed weighting strategies, adaptive weighting dynamically focuses on the currently weakest prefix, which better matches the prefix-level pruning behavior of beam search.

\end{itemize}

\begin{table}[h]
\centering
\small
\caption{Comparison of different training paradigms on the Office and Beauty datasets in terms of NDCG@10.
\textbf{Bold} indicates the overall best result, while \underline{underlined} marks the best within each group.}
\resizebox{0.48\textwidth}{!}{
\begin{tabular}{lcccc}
\toprule
\multirow{2}{*}{\textbf{Methods}} & \multicolumn{2}{c}{\textbf{Office}} & \multicolumn{2}{c}{\textbf{Beauty}} \\
\cmidrule(r){2-3}\cmidrule(r){4-5}
 & \textbf{TIGER} & \textbf{Llama} & \textbf{TIGER} & \textbf{Llama}  \\
\midrule
1-Stage [$\mathcal{L}_{\text{CE}}+ \lambda \mathcal{L}_{\text{S-DPO}}$]     & 0.0116 & 0.0089 & 0.0167 & 0.0146 \\
\textbf{1-Stage [\ourmethod{}-Pointwise]}     & \textbf{\underline{0.0337}} & 0.0265 & 0.0332 & \underline{0.0300} \\
\textbf{1-Stage [\ourmethod{}-Pairwise]}     & \textbf{\underline{0.0337}}& \underline{0.0269} & \textbf{\underline{0.0339}} & 0.0293 \\
\midrule
2-Stage [SDPO] & 0.0321 & 0.0256 & 0.0310 & 0.0262 \\
2-Stage [$\mathcal{L}_{\text{CE}} \rightarrow \mathcal{L}_{\text{point}}$ (Ours)]   & \underline{0.0323} & \textbf{\underline{0.0278}} & 0.0310 & \textbf{\underline{0.0308}} \\
2-Stage [$\mathcal{L}_{\text{CE}} \rightarrow\mathcal{L}_{\text{pair}}$ (Ours)]   & 0.0320 & 0.0260 & \underline{0.0312} & 0.0281 \\
\bottomrule
\end{tabular}
}
\label{tab:onestage_twostage}
\end{table}

\subsubsection{\textbf{One-Stage vs. Two-Stage Training}}

While standard alignment baselines (e.g., S-DPO) typically adopt a \textbf{two-stage} pipeline (SFT$\rightarrow \mathcal{L}_{\text{S-DPO}}$), our method is naturally formulated under a unified \textbf{one-stage} paradigm via $\mathcal{L}_{\text{unified}} = \mathcal{L}_{\text{CE}} + \beta \mathcal{L}_{\text{prefix}}$.
To ensure fair comparison and isolate the impact of the training strategy, we conduct cross-experiments by implementing S-DPO in a one-stage setting and adapting our method to a two-stage setting.
\autoref{tab:onestage_twostage} presents the results on the Office and Beauty datasets with both the TIGER and Llama backbones. The results reveal the following key observations:
\begin{itemize}[leftmargin=*]

\item \textbf{Consistent effectiveness across training paradigms.}
Our method achieves superior performance under the one-stage setting for TIGER and competitive results under the two-stage setting for Llama. This indicates that our proposed ranking objectives are highly versatile and can be effectively adapted to different training strategies to maximize model performance.
\item \textbf{Prefix-aware optimization consistently outperforms S-DPO.}
Under identical conditions, our method (both pairwise and pointwise) surpasses S-DPO across all metrics, demonstrating the advantage of our prefix-aware optimization design.
\end{itemize}

\subsubsection{\textbf{Robustness on Tokenization}}

Following prior work, we primarily adopt the standard RQ tokenizer for all methods to ensure fair comparison.
To further evaluate the robustness of \ourmethod{} under different tokenization qualities and structures, we additionally consider two alternative tokenizers: \textit{Suboptimal RQ}, which introduces random perturbations to 10\% of the codes at 2 out of 4 levels, and \textit{OPQ}, which uses a 4-code product quantization scheme.

\begin{figure}[h]
    \centering
    \includegraphics[width=0.9\linewidth]{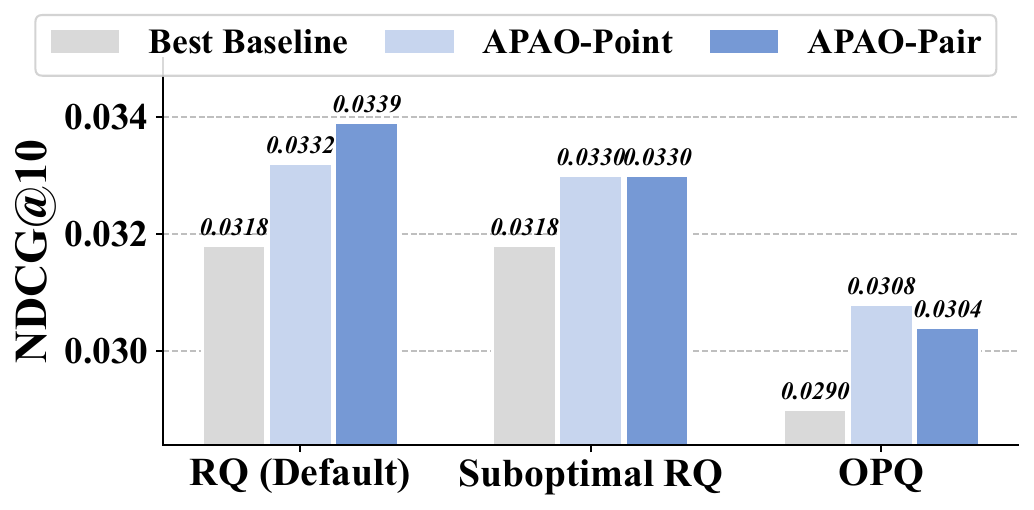}
    \caption{Robustness analysis on the Beauty dataset using the TIGER backbone. Similar trends are observed across other datasets.}
    \label{fig:tokenization}
\end{figure}

As shown in \autoref{fig:tokenization}, both pointwise and pairwise variants of \ourmethod{} consistently outperform the best baseline across all tokenizer settings.
Although the absolute performance varies with tokenizer quality, the relative improvements remain stable, demonstrating that the effectiveness of \ourmethod{} is not tied to a specific tokenizer design.
These results suggest that prefix-aware optimization generalizes well across different semantic tokenization strategies.

\begin{figure}[h]
    \centering
    \includegraphics[width=1.0\linewidth]{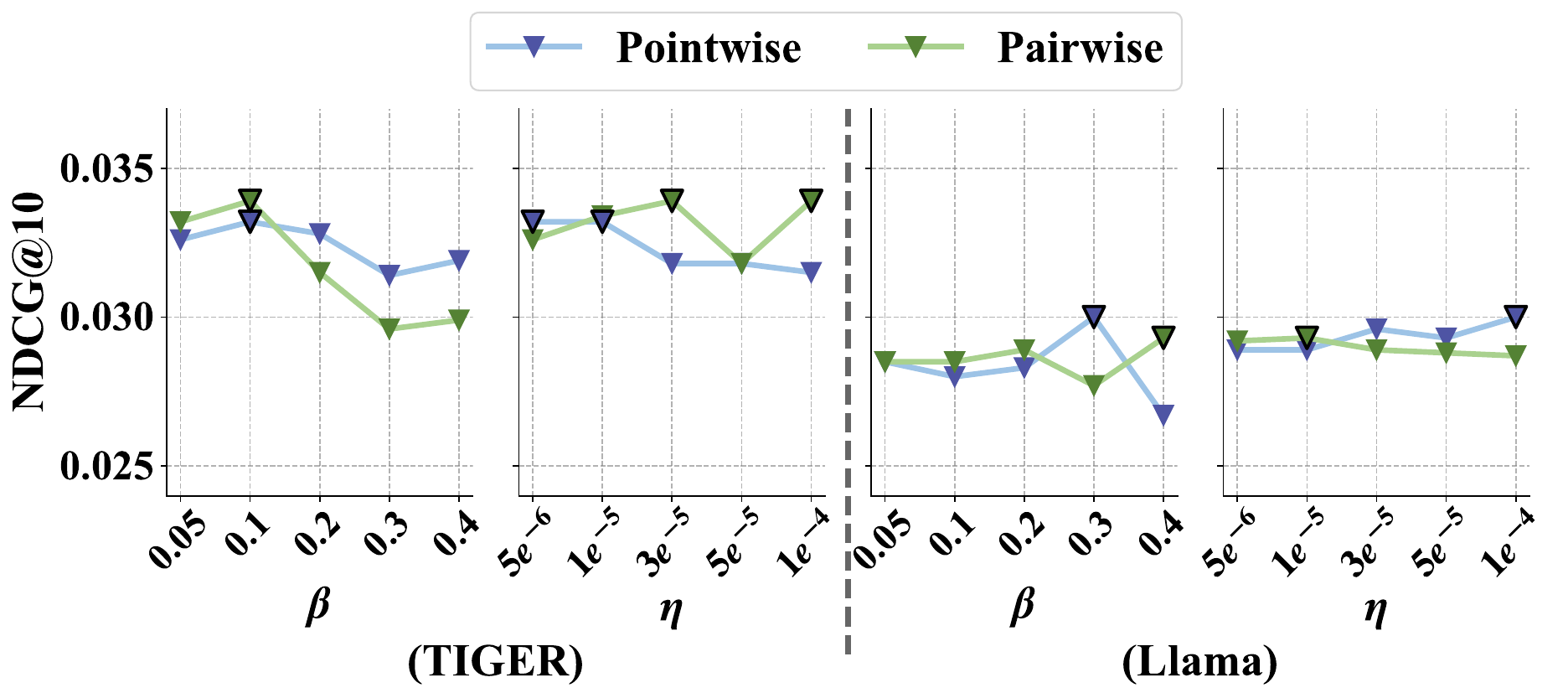}
    \caption{Parameter sensitivity analysis of \ourmethod{}-pointwise and \ourmethod{}-pairwise with respect to hyperparameters $\beta$ and $\eta$ on the Beauty dataset. The experiments are conducted on two generative backbones: TIGER (left) and Llama (right).}
    \label{fig:param_sensitivity}
\end{figure}

\subsubsection{\textbf{Hyperparameter Analysis}}

We conduct a sensitivity analysis of key hyperparameters in \ourmethod{} to evaluate their influence. The results on the Beauty dataset are illustrated in~\autoref{fig:param_sensitivity}, while the detailed optimal configurations for all datasets are provided in~\autoref{tab:reproduction} (\textbf{\autoref{appendix:implement}}).

\begin{itemize}[leftmargin=*]
    \item \textbf{Impact of Prefix-aware Loss Weight ($\beta$).} The parameter $\beta$ determines the contribution of the auxiliary prefix loss to the total objective. 
    As shown in \autoref{fig:param_sensitivity}, while the optimal $\beta$ varies across different backbones, we empirically find that values within the range of $[0.1, 0.4]$ generally yield favorable performance.
    \item \textbf{Impact of Adaptive Weighting Factor ($\eta$).} 
    The hyperparameter $\eta$ controls the sensitivity of the adaptive weighting mechanism (Eq.~(\ref{eq:weight_update})).
    As shown in \autoref{fig:param_sensitivity}, our method maintains relatively robust performance across a range of $\eta$. 
    Additionally, we empirically observed that larger datasets tend to benefit from a smaller $\eta$, which enhances stability during optimization.
\end{itemize}

\begin{figure}[h]
    \centering
    \includegraphics[width=1.0\linewidth]{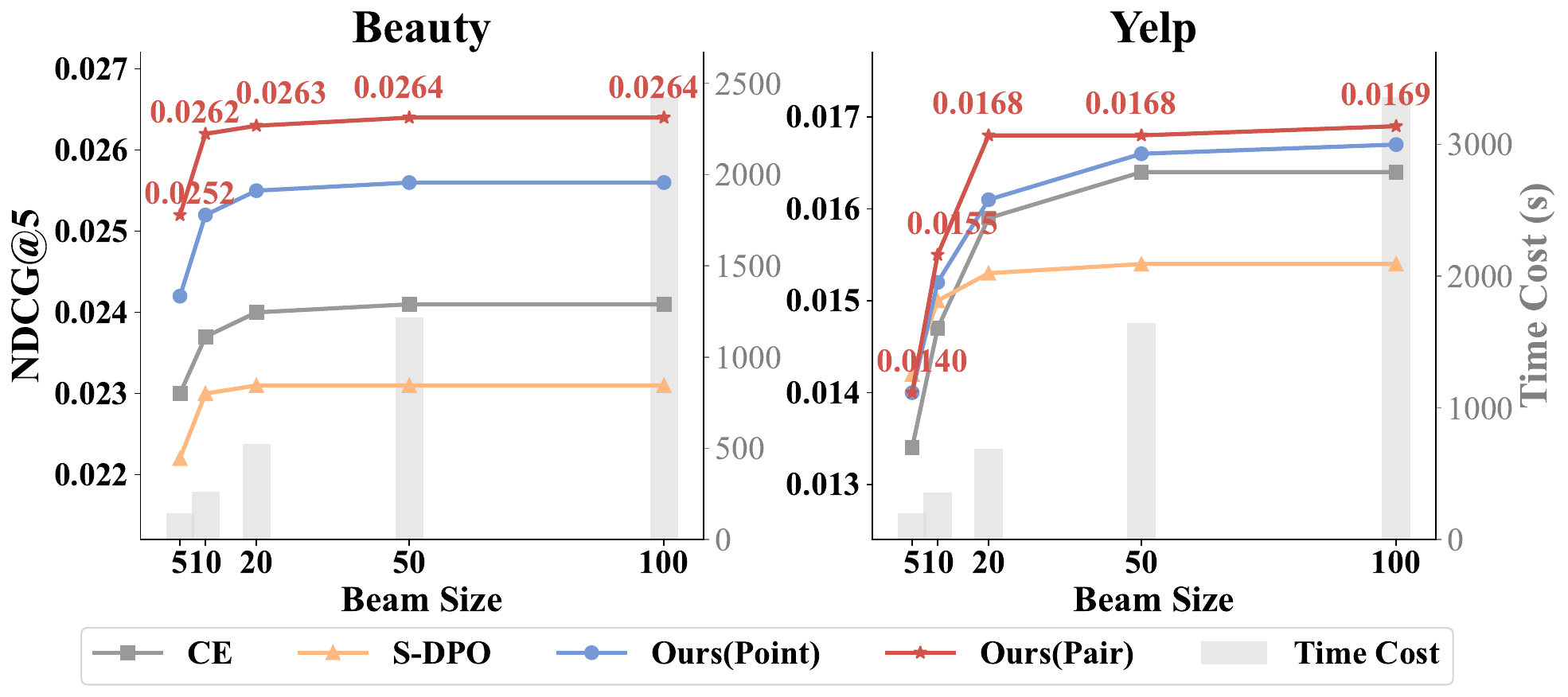}
    \caption{Scalability analysis of beam size $K$ on the Beauty and Yelp datasets.}
    \label{fig:beam_size}
\end{figure}

\subsubsection{\textbf{Scalability w.r.t. Beam Size}}

Finally, we investigate the scalability of \ourmethod{} by varying the inference beam size $K$ on the Beauty and Yelp datasets, simultaneously recording the ranking performance (NDCG@5) and total time cost.
As shown in \autoref{fig:beam_size}, \ourmethod{} (both pointwise and pairwise variants) consistently outperforms baselines across all beam sizes. Notably, our method with $K=20$ achieves comparable or better NDCG@5 than baselines with $K=100$.
This confirms both the \textbf{effectiveness} of our approach in limited search steps and its \textbf{robustness} over varying beam sizes.

\subsection{RQ3: Prefix-level Performance}

To evaluate whether \ourmethod{} effectively bridges the training–inference gap at the granularity of beam expansion, we analyze \textbf{prefix-level \textit{Recall@20}} over the four semantic code levels produced by the 4-level tokenizer. Specifically, prefix indices 0–3 correspond to progressively longer semantic prefixes, with prefix 3 representing the complete item.

\autoref{fig:prefix_recall} presents the results on the Office and Beauty datasets by comparing the best variant of our method against the best-performing baseline, detailing both the absolute recall and the relative improvement.

\begin{figure}[h]
    \centering
    \includegraphics[width=1.0\linewidth]{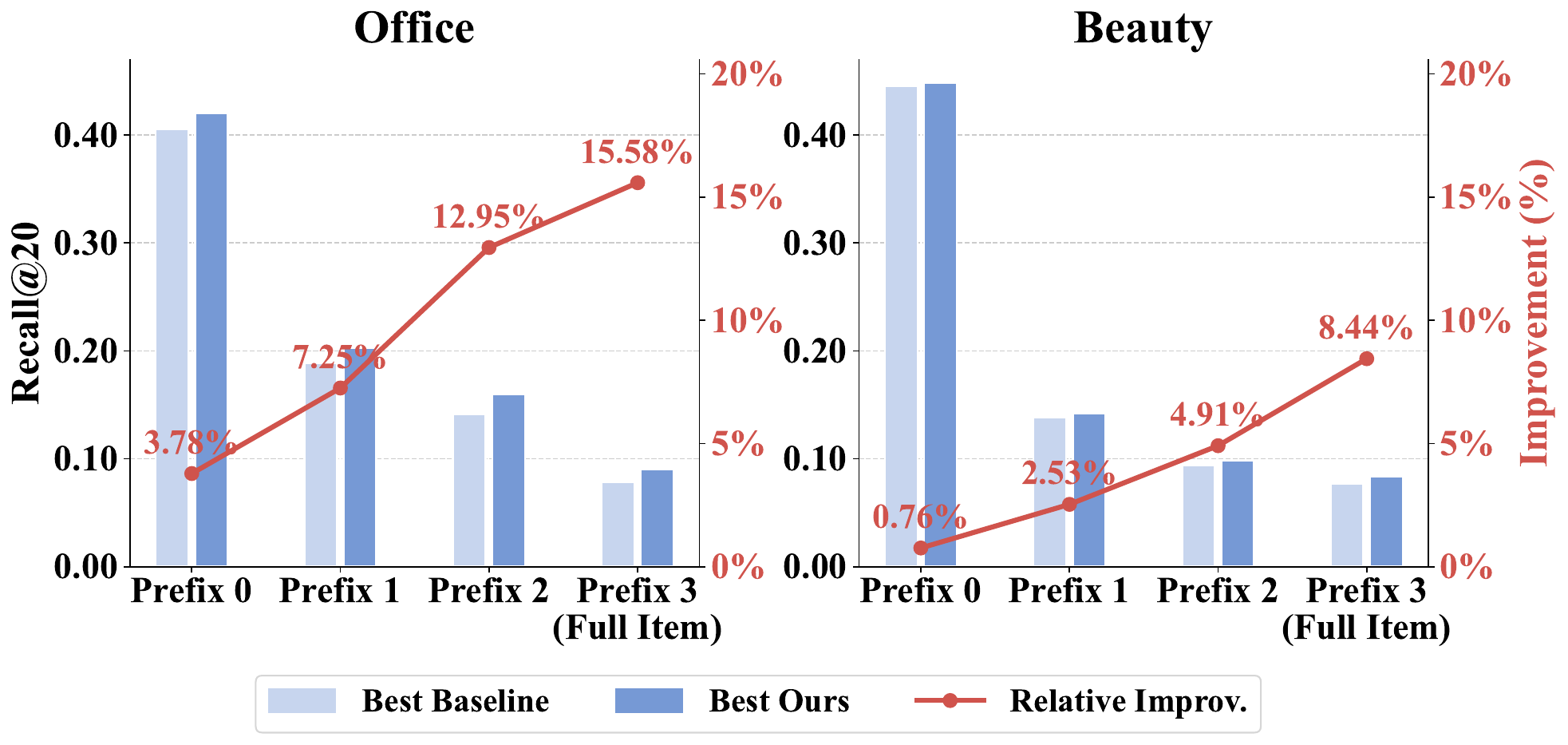}
    \caption{Prefix-level Recall@20 and relative improvement on the Office and Beauty datasets. The left axis represents absolute recall, and the right axis denotes the relative improvement over the best baseline across prefixes 0–3 (prefix 3 corresponds to the complete item).}
    \label{fig:prefix_recall}
\end{figure}

In terms of absolute recall, \ourmethod{} consistently outperforms the best baseline across all prefix lengths. The relative improvement also grows steadily as the prefix length increases, suggesting that \ourmethod{} more effectively prevents the ground-truth item from being pruned as the beam expands.
These results indicate that \ourmethod{} enhances both intermediate and final top-$K$ performance during beam search, aligning the training objective with the inference procedure and alleviating the training–inference inconsistency.

\subsection{RQ4: Efficiency Analysis}
\label{sec: efficiency}
Consistent with the complexity analysis in Section~\ref{sec: Time_Complexity_Analysis}, we empirically evaluate the training efficiency of \ourmethod{} against representative baselines.
As shown in \autoref{tab:efficiency_time}, we report the average training time per epoch (\textbf{Time}) and the total epochs to convergence (\textbf{Epochs}).

\begin{table}[h]
    \centering
    \tabcolsep=0.08cm
    \caption{Training efficiency comparison on the TIGER backbone. The results are presented in the format of \textbf{Time per Epoch (s) / Total Epochs to Convergence}.
}
    \begin{tabular}{llcccc}
        \toprule
        \multirow{2}{*}{\textbf{Type}} & \multirow{2}{*}{\textbf{Methods}} & \multicolumn{4}{c}{\textbf{Time (s) | Epochs}} \\
        \cmidrule(r){3-6}
        & & \textbf{Office} & \textbf{Grocery} & \textbf{Beauty} & \textbf{Yelp}\\
        \midrule
        \multirow{2}{*}{Pointwise} & CE & 15 | 45  & 39 | 63 & 53 | 74 & 66 | 123 \\
        & \ourmethod{} & 15 | 45 & 35 | 63 & 54 | 83 & 79 | 165 \\
        \midrule
        \multirow{2}{*}{Pairwise} & S-DPO & 199 | 93 & 572 | 106 & 693 | 109 & 1175 | 159 \\
        & \ourmethod{} & 163 | 58 & 456 | 67 & 637 | 119 & 946 | 144 \\
        \bottomrule
    \end{tabular}
    \label{tab:efficiency_time}
\end{table}

Specifically, for our \textbf{pointwise} variant, the added prefix-aware computation is efficiently implemented and maintains training efficiency comparable to the CE baseline.
For the \textbf{pairwise} variant, employing 100 negative samples inevitably incurs higher costs than pointwise methods. However, \ourmethod{}-Pairwise trains notably faster than S-DPO, as it bypasses the additional inference required by the reference model.
Overall, \ourmethod{} effectively improves model performance while preserving practical computational efficiency.

\subsection{Industrial Deployment}

\ourmethod{}-Pointwise has recently been deployed in a real-world industrial recommendation system on the WeChat Official Accounts Platform.
Online A/B testing against the production CE baseline demonstrates consistent improvements across multiple metrics, including +0.9\% pCTR, +0.706\% image CTR, +0.907\% image clicks per user, and +0.205\% image impressions per user.

\section{Conclusion}
In this work, we addressed the training–inference inconsistency in generative recommendation caused by the prefix-level constraints introduced by beam search decoding. To tackle this issue, we proposed the Adaptive Prefix-Aware Optimization (APAO) framework, which introduces prefix-level optimization during training to better align the model with its inference process. APAO includes both pointwise and pairwise optimization objectives that capture complementary learning perspectives and an adaptive worst-prefix optimization strategy that dynamically emphasizes the most challenging prefixes during training. Theoretical and empirical analyses demonstrate that APAO effectively enhances alignment with beam search decoding while introducing no significant computational overhead. Extensive experiments confirm that APAO consistently improves ranking performance and alleviates the training–inference gap.
We hope this work can inspire future research toward inference-aware training for generative recommendation.

\begin{acks}
This work is supported by the National Natural Science Foundation of China (Grant No. 62372260), the National Key Research and Development Program of China (Grant No. 2024YFC3307403), and Quancheng Laboratory of China (Grant No. QCL20250105). Weizhi Ma is also sponsored by the Beijing Nova Program.
\end{acks}

%%
%% The next two lines define the bibliography style to be used, and
%% the bibliography file.
\bibliographystyle{ACM-Reference-Format}
\balance
\bibliography{sample-base}

@String{Computer = "{IEEE} Computer" }

@String{Chelsea = "Chelsea" }

@article{sun2023GenRetrie,
  title={Learning to tokenize for generative retrieval},
  author={Sun, Weiwei and Yan, Lingyong and Chen, Zheng and Wang, Shuaiqiang and Zhu, Haichao and Ren, Pengjie and Chen, Zhumin and Yin, Dawei and Rijke, Maarten and Ren, Zhaochun},
  journal={Advances in Neural Information Processing Systems},
  volume={36},
  pages={46345--46361},
  year={2023}
}

@article{yang2023AutoSearchIndexer,
  title={Auto search indexer for end-to-end document retrieval},
  author={Yang, Tianchi and Song, Minghui and Zhang, Zihan and Huang, Haizhen and Deng, Weiwei and Sun, Feng and Zhang, Qi},
  journal={arXiv preprint arXiv:2310.12455},
  year={2023}
}

@article{zeng2022glm,
  title={Glm-130b: An open bilingual pre-trained model},
  author={Zeng, Aohan and Liu, Xiao and Du, Zhengxiao and Wang, Zihan and Lai, Hanyu and Ding, Ming and Yang, Zhuoyi and Xu, Yifan and Zheng, Wendi and Xia, Xiao and others},
  journal={arXiv preprint arXiv:2210.02414},
  year={2022}
}

@article{gpt42023technical,
  title={GPT-4 Technical Report},
  author={OpenAI},
  journal={arXiv preprint arXiv:2303.08774},
  year={2023}
}

@article{dubey2024llama,
  title={The llama 3 herd of models},
  author={Grattafiori, Aaron and Dubey, Abhimanyu and Jauhri, Abhinav and Pandey, Abhinav and Kadian, Abhishek and Al-Dahle, Ahmad and Letman, Aiesha and Mathur, Akhil and Schelten, Alan and Vaughan, Alex and others},
  journal={arXiv preprint arXiv:2407.21783},
  year={2024}
}

@inproceedings{zheng2024lcrec,
  title={Adapting large language models by integrating collaborative semantics for recommendation},
  author={Zheng, Bowen and Hou, Yupeng and Lu, Hongyu and Chen, Yu and Zhao, Wayne Xin and Chen, Ming and Wen, Ji-Rong},
  booktitle={2024 IEEE 40th International Conference on Data Engineering (ICDE)},
  pages={1435--1448},
  year={2024},
  organization={IEEE}
}

@article{rajput2023tiger,
  title={Recommender systems with generative retrieval},
  author={Rajput, Shashank and Mehta, Nikhil and Singh, Anima and Hulikal Keshavan, Raghunandan and Vu, Trung and Heldt, Lukasz and Hong, Lichan and Tay, Yi and Tran, Vinh and Samost, Jonah and others},
  journal={Advances in Neural Information Processing Systems},
  volume={36},
  pages={10299--10315},
  year={2023}
}

@article{zhou2025onerec,
  title={OneRec Technical Report},
  author={Zhou, Guorui and Deng, Jiaxin and Zhang, Jinghao and Cai, Kuo and Ren, Lejian and Luo, Qiang and Wang, Qianqian and Hu, Qigen and Huang, Rui and Wang, Shiyao and others},
  journal={arXiv preprint arXiv:2506.13695},
  year={2025}
}

@article{zhou2025onerecv2,
  title={OneRec-V2 Technical Report},
  author={Zhou, Guorui and Hu, Hengrui and Cheng, Hongtao and Wang, Huanjie and Deng, Jiaxin and Zhang, Jinghao and Cai, Kuo and Ren, Lejian and Ren, Lu and Yu, Liao and others},
  journal={arXiv preprint arXiv:2508.20900},
  year={2025}
}

@inproceedings{wang2024letter,
  title={Learnable item tokenization for generative recommendation},
  author={Wang, Wenjie and Bao, Honghui and Lin, Xinyu and Zhang, Jizhi and Li, Yongqi and Feng, Fuli and Ng, See-Kiong and Chua, Tat-Seng},
  booktitle={Proceedings of the 33rd ACM International Conference on Information and Knowledge Management},
  pages={2400--2409},
  year={2024}
}

@inproceedings{liu2025ETEGRec,
  title={Generative recommender with end-to-end learnable item tokenization},
  author={Liu, Enze and Zheng, Bowen and Ling, Cheng and Hu, Lantao and Li, Han and Zhao, Wayne Xin},
  booktitle={Proceedings of the 48th International ACM SIGIR Conference on Research and Development in Information Retrieval},
  pages={729--739},
  year={2025}
}

@inproceedings{wang2025msl,
  title={Msl: Not all tokens are what you need for tuning llm as a recommender},
  author={Wang, Bohao and Liu, Feng and Chen, Jiawei and Lou, Xingyu and Zhang, Changwang and Wang, Jun and Sun, Yuegang and Feng, Yan and Chen, Chun and Wang, Can},
  booktitle={Proceedings of the 48th International ACM SIGIR Conference on Research and Development in Information Retrieval},
  pages={1912--1922},
  year={2025}
}

@article{chen2024SDPO,
  title={On softmax direct preference optimization for recommendation},
  author={Chen, Yuxin and Tan, Junfei and Zhang, An and Yang, Zhengyi and Sheng, Leheng and Zhang, Enzhi and Wang, Xiang and Chua, Tat-Seng},
  journal={Advances in Neural Information Processing Systems},
  volume={37},
  pages={27463--27489},
  year={2024}
}

@article{rafailov2023DPO,
  title={Direct preference optimization: Your language model is secretly a reward model},
  author={Rafailov, Rafael and Sharma, Archit and Mitchell, Eric and Manning, Christopher D and Ermon, Stefano and Finn, Chelsea},
  journal={Advances in neural information processing systems},
  volume={36},
  pages={53728--53741},
  year={2023}
}

@inproceedings{renzhaochun2025constrained,
  title={Constrained Auto-Regressive Decoding Constrains Generative Retrieval},
  author={Wu, Shiguang and Ren, Zhaochun and Xin, Xin and Yang, Jiyuan and Zhang, Mengqi and Chen, Zhumin and de Rijke, Maarten and Ren, Pengjie},
  booktitle={Proceedings of the 48th International ACM SIGIR Conference on Research and Development in Information Retrieval},
  pages={2429--2440},
  year={2025}
}

@inproceedings{bai2024DMPO,
  title={Aligning large language model with direct multi-preference optimization for recommendation},
  author={Bai, Zhuoxi and Wu, Ning and Cai, Fengyu and Zhu, Xinyi and Xiong, Yun},
  booktitle={Proceedings of the 33rd ACM International Conference on Information and Knowledge Management},
  pages={76--86},
  year={2024}
}

@inproceedings{sun2019bert4rec,
  title={BERT4Rec: Sequential recommendation with bidirectional encoder representations from transformer},
  author={Sun, Fei and Liu, Jun and Wu, Jian and Pei, Changhua and Lin, Xiao and Ou, Wenwu and Jiang, Peng},
  booktitle={Proceedings of the 28th ACM international conference on information and knowledge management},
  pages={1441--1450},
  year={2019}
}

@inproceedings{kang2018sasrec,
  title={Self-attentive sequential recommendation},
  author={Kang, Wang-Cheng and McAuley, Julian},
  booktitle={2018 IEEE international conference on data mining (ICDM)},
  pages={197--206},
  year={2018},
  organization={IEEE}
}

@article{hidasi2015gru4rec,
  title={Session-based recommendations with recurrent neural networks},
  author={Hidasi, Bal{\'a}zs and Karatzoglou, Alexandros and Baltrunas, Linas and Tikk, Domonkos},
  journal={arXiv preprint arXiv:1511.06939},
  year={2015}
}

@inproceedings{tang2018Caser,
  title={Personalized top-n sequential recommendation via convolutional sequence embedding},
  author={Tang, Jiaxi and Wang, Ke},
  booktitle={Proceedings of the eleventh ACM international conference on web search and data mining},
  pages={565--573},
  year={2018}
}

@article{radford2018gpt1,
  title={Improving language understanding by generative pre-training},
  author={Radford, Alec and Narasimhan, Karthik and Salimans, Tim and Sutskever, Ilya and others},
  year={2018},
  publisher={San Francisco, CA, USA}
}

@inproceedings{hua2023index,
  title={How to index item ids for recommendation foundation models},
  author={Hua, Wenyue and Xu, Shuyuan and Ge, Yingqiang and Zhang, Yongfeng},
  booktitle={Proceedings of the Annual International ACM SIGIR Conference on Research and Development in Information Retrieval in the Asia Pacific Region},
  pages={195--204},
  year={2023}
}

@inproceedings{lee2022rqvae,
  title={Autoregressive image generation using residual quantization},
  author={Lee, Doyup and Kim, Chiheon and Kim, Saehoon and Cho, Minsu and Han, Wook-Shin},
  booktitle={Proceedings of the IEEE/CVF conference on computer vision and pattern recognition},
  pages={11523--11532},
  year={2022}
}

@inproceedings{liu2024multi,
  title={Multi-behavior generative recommendation},
  author={Liu, Zihan and Hou, Yupeng and McAuley, Julian},
  booktitle={Proceedings of the 33rd ACM International Conference on Information and Knowledge Management},
  pages={1575--1585},
  year={2024}
}

@article{zhai2025multimodal,
  title={Multimodal Quantitative Language for Generative Recommendation},
  author={Zhai, Jianyang and Mai, Zi-Feng and Wang, Chang-Dong and Yang, Feidiao and Zheng, Xiawu and Li, Hui and Tian, Yonghong},
  journal={arXiv preprint arXiv:2504.05314},
  year={2025}
}

@InProceedings{houactionpiece,
  title = 	 {{A}ction{P}iece: Contextually Tokenizing Action Sequences for Generative Recommendation},
  author =       {Hou, Yupeng and Ni, Jianmo and He, Zhankui and Sachdeva, Noveen and Kang, Wang-Cheng and Chi, Ed H. and Mcauley, Julian and Cheng, Derek Zhiyuan},
  booktitle = 	 {Proceedings of the 42nd International Conference on Machine Learning},
  pages = 	 {24004--24024},
  year = 	 {2025},
  editor = 	 {Singh, Aarti and Fazel, Maryam and Hsu, Daniel and Lacoste-Julien, Simon and Berkenkamp, Felix and Maharaj, Tegan and Wagstaff, Kiri and Zhu, Jerry},
  volume = 	 {267},
  series = 	 {Proceedings of Machine Learning Research},
  month = 	 {13--19 Jul},
  publisher =    {PMLR},
  url = 	 {https://proceedings.mlr.press/v267/hou25f.html},
}

@inproceedings{lin-efficient,
 author = {Lin, Xinyu and Yang, Chaoqun and Wang, Wenjie and Li, Yongqi and Du, Cunxiao and Feng, Fuli and Ng, See-Kiong and Chua, Tat-Seng},
 booktitle = {International Conference on Learning Representations},
 editor = {Y. Yue and A. Garg and N. Peng and F. Sha and R. Yu},
 pages = {91672--91697},
 title = {Efficient Inference for Large Language Model-based Generative Recommendation},
 url = {https://proceedings.iclr.cc/paper_files/paper/2025/file/e4bf5c3245fd92a4554a16af9803b757-Paper-Conference.pdf},
 volume = {2025},
 year = {2025}
}

@inproceedings{lin2025order,
  title={Order-agnostic Identifier for Large Language Model-based Generative Recommendation},
  author={Lin, Xinyu and Shi, Haihan and Wang, Wenjie and Feng, Fuli and Wang, Qifan and Ng, See-Kiong and Chua, Tat-Seng},
  booktitle={Proceedings of the 48th international ACM SIGIR conference on research and development in information retrieval},
  pages={1923--1933},
  year={2025}
}

@inproceedings{yang2025earn,
  title={EARN: Efficient Inference Acceleration for LLM-based Generative Recommendation by Register Tokens},
  author={Yang, Chaoqun and Lin, Xinyu and Wang, Wenjie and Li, Yongqi and Sun, Teng and Han, Xianjing and Chua, Tat-Seng},
  booktitle={Proceedings of the 31st ACM SIGKDD Conference on Knowledge Discovery and Data Mining V. 2},
  pages={3483--3494},
  year={2025}
}

@article{christiano2017deep,
  title={Deep reinforcement learning from human preferences},
  author={Christiano, Paul F and Leike, Jan and Brown, Tom and Martic, Miljan and Legg, Shane and Amodei, Dario},
  journal={Advances in neural information processing systems},
  volume={30},
  year={2017}
}

@article{ouyang2022training,
  title={Training language models to follow instructions with human feedback},
  author={Ouyang, Long and Wu, Jeffrey and Jiang, Xu and Almeida, Diogo and Wainwright, Carroll and Mishkin, Pamela and Zhang, Chong and Agarwal, Sandhini and Slama, Katarina and Ray, Alex and others},
  journal={Advances in neural information processing systems},
  volume={35},
  pages={27730--27744},
  year={2022}
}

@inproceedings{azar2024general,
  title={A general theoretical paradigm to understand learning from human preferences},
  author={Azar, Mohammad Gheshlaghi and Guo, Zhaohan Daniel and Piot, Bilal and Munos, Remi and Rowland, Mark and Valko, Michal and Calandriello, Daniele},
  booktitle={International Conference on Artificial Intelligence and Statistics},
  pages={4447--4455},
  year={2024},
  organization={PMLR}
}

@article{ethayarajh2024kto,
  title={Kto: Model alignment as prospect theoretic optimization},
  author={Ethayarajh, Kawin and Xu, Winnie and Muennighoff, Niklas and Jurafsky, Dan and Kiela, Douwe},
  journal={arXiv preprint arXiv:2402.01306},
  year={2024}
}

@article{piratla2022focus,
  title={Focus on the Common Good: Group Distributional Robustness Follows},
  author={Piratla, Vihari and Netrapalli, Praneeth and Sarawagi, Sunita},
  journal={ICLR},
  year={2022},
  publisher={arXiv}
}

@article{loshchilov2017decoupled,
  title={Decoupled weight decay regularization},
  author={Loshchilov, Ilya and Hutter, Frank},
  journal={arXiv preprint arXiv:1711.05101},
  year={2017}
}

@article{jarvelin2002ndcg,
  title={Cumulated gain-based evaluation of IR techniques},
  author={J{\"a}rvelin, Kalervo and Kek{\"a}l{\"a}inen, Jaana},
  journal={ACM Transactions on Information Systems (TOIS)},
  volume={20},
  number={4},
  pages={422--446},
  year={2002},
  publisher={ACM}
}

%%
%% If your work has an appendix, this is the place to put it.
\appendix

\section{Closed-form Update for Adaptive Worst-prefix Optimization}
\label{appendix: kkt}
\begin{proposition}[Closed-form update via KKT]
Consider the convex program
\[
\max_{w\in\Delta_T}\;\; F(w)
:= \sum_{m=1}^T w_m L_m - \frac1\eta \,KL(w\,\|\,w^t)
\]
where $\Delta_T=\{w\in\mathbb{R}^T_{\ge 0}:\sum_m w_m=1\}$ and assume $w^t\in\mathrm{int}(\Delta_T)$ (i.e., $w^t_m>0$ for all $m$). Then the unique maximizer is
\[
w^{t+1}_m
= \frac{w^t_m\exp\!\big(\eta L_m\big)}{\sum_{j=1}^T w^t_j\exp\!\big(\eta L_j\big)}.
\]
\end{proposition}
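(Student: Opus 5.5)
We argue directly from the KKT conditions, then confirm global optimality and uniqueness through strict concavity.

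\textbf{Concavity.} Write $KL(w\,\|\,w^t)=\sum_m w_m\log(w_m/w^t_m)$ with the convention $0\log 0=0$. The function $w\mapsto \sum_m w_m\log w_m$ is strictly convex on $\Delta_T$, and the remaining terms $\sum_m w_m L_m+\frac1\eta\sum_m w_m\log w^t_m$ are linear. Hence $F$ is strictly concave on the compact convex set $\Delta_T$ whenever $\eta>0$. Since $F$ is also continuous there (this uses $w^t_m>0$), a maximizer exists and is unique.

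\textbf{Stationarity.} We introduce a multiplier $\lambda$ for $\sum_m w_m=1$ and multipliers $\mu_m\ge 0$ for $w_m\ge 0$. The Lagrangian is
\[
\mathcal{L}(w,\lambda,\mu)=\sum_m w_m L_m-\frac1\eta\sum_m w_m\log\frac{w_m}{w^t_m}+\lambda\Big(1-\sum_m w_m\Big)+\sum_m\mu_m w_m .
\]
Setting $\partial\mathcal{L}/\partial w_m=0$ gives
\[
L_m-\frac1\eta\Big(\log\frac{w_m}{w^t_m}+1\Big)-\lambda+\mu_m=0 .
\]

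\textbf{Boundary and complementary slackness.} The step needing care is excluding boundary maximizers, since $\log w_m$ is not differentiable at $0$. We use the fact that $\partial F/\partial w_m\to+\infty$ as $w_m\to 0^+$. Suppose the maximizer had $w_m=0$ for some $m$. Shift a small mass $\varepsilon$ to coordinate $m$ from a coordinate $k$ with $w_k>0$. The gain is of order $-\frac1\eta\,\varepsilon\log\varepsilon$, which is positive and dominates the $O(\varepsilon)$ loss at coordinate $k$. This contradicts optimality, so the maximizer lies in $\mathrm{int}(\Delta_T)$.

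In the interior, complementary slackness forces $\mu_m=0$. Solving the stationarity equation then gives
\[
w_m=w^t_m\exp(\eta L_m)\cdot\exp(-\eta\lambda-1).
\]
The common factor $\exp(-\eta\lambda-1)$ is fixed by $\sum_m w_m=1$, which yields exactly the normalized form in the statement.

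\textbf{Conclusion.} The problem is concave with affine constraints, and the feasible point $w^t$ lies in the relative interior, so Slater's condition holds. KKT is therefore sufficient for optimality. The candidate satisfies the KKT conditions with $\mu=0$, so it is the maximizer, and it is the unique one by strict concavity.

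As an alternative check, the Gibbs variational identity gives
\[
F(w)=\frac1\eta\log Z-\frac1\eta KL(w\,\|\,w^{t+1}),\qquad Z=\sum_j w^t_j e^{\eta L_j}.
\]
Since $KL(w\,\|\,w^{t+1})\ge 0$ with equality only at $w=w^{t+1}$, this yields optimality and uniqueness at once. It is a fallback in case the boundary argument needs tightening.
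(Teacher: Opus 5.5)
Your proposal is correct and follows essentially the same route as the paper: KKT stationarity, exclusion of boundary maximizers because $\partial F/\partial w_m\to+\infty$ as $w_m\to 0^+$, then $\mu_m=0$, normalization, and uniqueness from strict concavity; your $\varepsilon$-mass-shift argument makes rigorous the interiority step that the paper only asserts. One small correction: Slater's condition is what makes the KKT conditions \emph{necessary}, whereas their sufficiency here comes from concavity alone, and your Gibbs identity $F(w)=\frac1\eta\log Z-\frac1\eta KL(w\,\|\,w^{t+1})$ is already a complete proof that needs neither.
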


\begin{proof}
The objective is strictly concave on $\Delta_T$ because it is a sum of a linear term and the (strictly) concave entropy term
\(
-\frac1\eta\sum_m w_m\log w_m
\)
plus the linear term
\(
\frac1\eta\sum_m w_m \log w^t_m
\).
Hence the Karush--Kuhn--Tucker (KKT) conditions are necessary and sufficient.

We write the problem with constraints $\sum_m w_m=1$ and $w_m\ge 0$.
The Lagrangian is
\[
\mathcal{L}(w,\lambda,\nu)
= \sum_m w_m L_m - \frac1\eta\sum_m w_m\log\frac{w_m}{w_m^t}
+ \lambda\!\left(\sum_m w_m-1\right) - \sum_m \nu_m w_m,
\]
with multipliers $\lambda\in\mathbb{R}$ and $\nu_m\ge 0$.

\paragraph{Stationarity.}
For each $m$,
\[
\frac{\partial\mathcal{L}}{\partial w_m}
= L_m - \frac1\eta\!\left(\log\frac{w_m}{w_m^t}+1\right) + \lambda - \nu_m
= 0.
\]
Rearranging gives
\[
\log\frac{w_m}{w_m^t}= \eta\,(L_m+\lambda-\nu_m) - 1,
\quad\text{so}\quad
w_m = w_m^t \exp\!\big(\eta(L_m+\lambda-\nu_m) - 1\big).
\tag{$\ast$}
\]

\paragraph{Complementary slackness and interiority.}
% Because $w_m^t>0$ for all $m$ and the entropy regularization is strictly concave, the optimum lies in the interior of the simplex, i.e., $w_m>0$ for all $m$.
% Hence by complementary slackness $\,\nu_m w_m=0\,$ we must have $\nu_m=0$ for all $m$.
Because $w_m^t>0$ for all $m$, the gradient of the KL term
becomes singular as $w_m \to 0^+$.
Hence the maximizer cannot lie on the boundary of the simplex,
and the optimum satisfies $w_m>0$ for all $m$.

Plugging $\nu_m=0$ into $(\ast)$ yields
\[
w_m = \exp(\eta\lambda-1)\, w_m^t \exp(\eta L_m).
\]
The common factor $\exp(\eta\lambda-1)$ is fixed by the simplex constraint $\sum_m w_m=1$, giving
\[
\exp(\eta\lambda-1)
= \left(\sum_{j=1}^T w^t_j \exp(\eta L_j)\right)^{-1}.
\]
Therefore
\[
w^{t+1}_m
= \frac{w^t_m \exp(\eta L_m)}{\sum_{j=1}^T w^t_j \exp(\eta L_j)}.
\]

\paragraph{Feasibility and dual feasibility.}
By construction $w^{t+1}\in\Delta_T$ and $\nu_m=0\ge 0$ satisfy primal/dual feasibility. Since the objective is strictly concave, this solution is the unique maximizer.
\end{proof}

\section{Proof of Theorem~\ref{thm:consistency}}
\label{appendix_theorem1}

% \begin{theorem}[Optimization Consistency]
% Optimizing the prefix-aware loss (Eq.~(\ref{equ:overall_loss})) essentially optimizes a lower bound of the ranking metric (e.g., recall) under beam search. Therefore, optimizing the prefix-aware loss improves the ranking performance in the beam search setting.
% \end{theorem}

\begin{proof}
Recall the beam search success condition defined in Eq.~(\ref{eq:beam_recall}). A target item $y$ is successfully recalled if and only if its prefix ranks within the top-$K$ at every decoding step $t$. Let $A_m$ denote the event that the positive prefix survives the beam pruning at step $m$, formally defined as:
\[
A_m = \left\{ \operatorname{rank}_{m}(S_m(y^{<m})) \le K \right\}.
\]
Consequently, the indicator function for the beam search success of item $y$ is the product of these step-wise indicators:
\[
\mathbb{I}_{\text{Beam}}(y) = \prod_{m=1}^T \mathbb{I}(A_m).
\]
By applying the union bound inequality, the probability of the item being generated (i.e., the success condition) is lower-bounded by the probability that no failure occurs at any step:
\begin{equation}
\label{eq:union_lower_bound}
\mathbb{I}_{\text{Beam}}(y) 
= 1 - \mathbb{I}\left( \bigcup_{m=1}^T A_m^c \right) 
\;\ge\; 
1 - \sum_{m=1}^{T} \mathbb{I}(A_m^c),
\end{equation}
where $A_m^c$ represents the failure event at step $m$ (i.e., $\operatorname{rank}_m > K$).

Next, we establish the relationship between this discrete failure event and our continuous pairwise loss. Let $\mathcal{N}$ be the set of sampled negative prefixes and define $\phi_m$ as the log-sum-exp difference between the negative and positive prefix scores:
\[
\phi_m = \log \sum_{j \in \mathcal{N}} \exp\!\big( s_{j,-}^{m} - s_{i,+}^{m} \big).
\]
If the ranking failure $A_m^c$ occurs within the sampled set, it implies that the positive prefix score is surpassed by at least one negative prefix, i.e., $\max_j (s_{j,-}^{m} - s_{i,+}^{m}) > 0$. Since the log-sum-exp function upper bounds the maximum, we have $\phi_m \ge \max_j (s_{j,-}^{m} - s_{i,+}^{m})$. Therefore, the failure event implies a positive $\phi_m$:
\[
A_m^c \subseteq \{ \phi_m > 0 \} \iff A_m^c \subseteq \{ -\phi_m < 0 \}.
\]
This inclusion yields the inequality for the indicator functions:
\begin{equation}
\mathbb{I}(A_m^c) \;\le\; \mathbb{I}(-\phi_m < 0).
\end{equation}

To derive a differentiable optimization objective, we introduce the exponential surrogate function $\overline{\iota}(x)$, defined as:
\[
\overline{\iota}(x)
=
\exp\!\big(-(-x)_+\big)
=
\begin{cases}
1, & x \ge 0,\\
e^{x}, & x < 0.
\end{cases}
\]
The function $\overline{\iota}(x)$ strictly upper bounds the step function, i.e., $\overline{\iota}(x) \ge \mathbb{I}(x \ge 0)$. Applying this to our context (with $x = \phi_m$):
\[
\mathbb{I}(A_m^c) 
\;\le\; 
\mathbb{I}(-\phi_m < 0) 
\;\le\; 
\overline{\iota}(\phi_m) 
= 
\exp\!\big(-(-\phi_m)_+\big).
\]
Substituting this back into Eq.~(\ref{eq:union_lower_bound}), we obtain a sample-level lower bound for the beam search success:
\begin{equation}
\label{eq:surrogate_bound}
\mathbb{I}_{\text{Beam}}(y)
\;\ge\;
1 - \sum_{m=1}^{T}\exp\!\big(-(-\phi_m)_+\big).
\end{equation}

Finally, we link this bound to the prefix-aware pairwise loss defined in Eq.~(\ref{equ:pair_loss_def}). The loss can be rewritten as:
\[
\mathcal{L}_{\text{pair}}(m) = -\log \sigma(-\phi_m) = \log(1 + e^{\phi_m}).
\]
Let $\ell(z) = \log(1 + e^z)$. Since $e^{-(-z)_+} \le 1+e^z = e^{\ell(z)}$ holds for all real numbers $z$, we have:
\begin{equation}
\exp\!\big(-(-\phi_m)_+\big) \;\le\; \exp\!\big(\mathcal{L}_{\text{pair}}(m)\big).
\label{eq:pair_link}
\end{equation}
Combining Eq.~(\ref{eq:surrogate_bound}) and Eq.~(\ref{eq:pair_link}):
\[
\mathbb{I}_{\text{Beam}}(y) \;\ge\; 1 - \sum_{m=1}^T \exp\!\big(\mathcal{L}_{\text{pair}}(m)\big).
\]
This demonstrates that minimizing the prefix-aware pairwise loss $\mathcal{L}_{\text{pair}}(m)$ minimizes the upper bound of the failure probability, thereby explicitly maximizing the lower bound of the beam search recall $\mathbb{I}_{\text{Beam}}(y)$.

For the pointwise variant, the proof follows an analogous logic by viewing the pointwise cross-entropy as a pairwise loss defined against the entire vocabulary (where all non-target tokens serve as negatives).
\end{proof}

\section{Additional Implementation Details}
\label{appendix:implement}

\noindent \textbf{Model Architectures.}
For the TIGER backbone, we adopt a Transformer-based encoder-decoder architecture. Both the encoder and decoder consist of 4 layers, each featuring 6 self-attention heads with a head dimension of 64. The ReLU activation function is applied across all layers. For the Llama backbone, we utilize an 8-layer Transformer decoder. The learnable token embedding dimension is set to 128 for all backbones.

\noindent \textbf{Baseline Settings.}
Regarding the baselines, for MSL, the temperature parameter $\tau$ is tuned over the set $\{1.0, 2.0, 3.0\}$. For DPO, DMPO, and S-DPO, we perform supervised fine-tuning (SFT) with cross-entropy loss as a prerequisite, following the protocol in~\cite{chen2024SDPO}. The hyperparameter $\beta$ is selected from $\{0.5, 1.0, 2.0\}$.

The optimal hyperparameters of \ourmethod{} for each dataset are reported in~\autoref{tab:reproduction}.

\begin{table}[h]
\centering
\small
\caption{Best hyperparameters of \ourmethod{} for each dataset.}
\label{tab:reproduction}
\renewcommand\arraystretch{1.1}
\tabcolsep=0.06cm
\resizebox{0.48\textwidth}{!}{
\begin{tabular}{llcccccccc}
\toprule
\multirow{2}{*}{\textbf{Backbone}} & \multirow{2}{*}{\textbf{Methods}} & \multicolumn{2}{c}{\textbf{Office}} & \multicolumn{2}{c}{\textbf{Grocery}} & \multicolumn{2}{c}{\textbf{Beauty}} & \multicolumn{2}{c}{\textbf{Yelp}} \\
\cmidrule(r){3-4}\cmidrule(r){5-6}\cmidrule(r){7-8}\cmidrule(r){9-10}
& & $\beta$ & $\eta$ & $\beta$ & $\eta$ & $\beta$ & $\eta$ & $\beta$ & $\eta$ \\
\midrule
\multirow{2}{*}{TIGER} & \ourmethod{}-Pointwise  & 0.3 & $1e^{-4}$ & 0.1 & $5e^{-5}$ & 0.1 & $1e^{-5}$ & 0.05 & $1e^{-5}$ \\
\cmidrule(r){2-10}
 & \ourmethod{}-Pairwise  & 0.2 & $3e^{-5}$ & 0.1 & $5e^{-5}$ & 0.1 & $1e^{-4}$ & 0.05 & $5e^{-6}$ \\
\midrule
\multirow{2}{*}{Llama} & \ourmethod{}-Pointwise  & 0.4 & $5e^{-4}$ & 0.4 & $5e^{-5}$ & 0.3 & $1e^{-4}$ & 0.4 & $1e^{-5}$ \\
\cmidrule(r){2-10}
& \ourmethod{}-Pairwise  & 0.4 & $5e^{-5}$ & 0.2 & $3e^{-5}$ & 0.4 & $1e^{-5}$ & 0.1 & $5e^{-6}$ \\
\bottomrule
\end{tabular}
}
\end{table}

\section{Additional Discussion.}
\label{appendix: discussion}
\noindent \textbf{Relation to preference optimization baselines.}
While \ourmethod{}-pairwise shares structural similarities with S-DPO~\cite{chen2024SDPO}—specifically the use of a softmax-based ranking loss over multiple negatives—they differ fundamentally in their \textbf{optimization granularity}.

S-DPO optimizes preferences at the \textbf{complete sequence level} (i.e., the final item). It computes the loss based on the cumulative likelihood of complete item titles, involving only a single summation over the negative candidate set $\mathcal{N}$:
\begin{equation}
\begin{aligned}
    \label{equ:s_dpo}
    & \mathcal{L}_{\text{S-DPO}} = \\
    & -\log \sigma \left( - \log \underbrace{\sum_{j \in \mathcal{N}} \exp \left( \beta \log \frac{\pi_\theta(e_{j, -}|x_u)}{\pi_{\text{ref}}(e_{j, -}|x_u)} - \beta \log \frac{\pi_\theta(e_{i, +}|x_u)}{\pi_{\text{ref}}(e_{i, +}|x_u)} \right)}_{\text{Sum over negative items (Sequence-level)}} \right),
\end{aligned}
\end{equation}
where $e_{i, +}$ and $e_{j, -}$ represent the complete preferred and dispreferred item sequences, respectively.

In contrast, our method performs optimization at \textbf{every intermediate prefix step}. This dense supervision is essential for autoregressive generation tasks like beam search, where early deviations can lead to error propagation. By aligning preferences at each token position, \ourmethod{} ensures the model stays on the correct decoding path. The unified objective is defined as:

\begin{equation}
\begin{aligned}
\label{equ:appendix_final_pairwise}
\mathcal{L}_\text{pair}
= \mathcal{L}_{\mathrm{CE}}
+ \beta \cdot
\underbrace{\sum_{m=1}^{T} w_m}_{%
\smash{\raisebox{-0.1ex}{$\scriptstyle\text{\clap{Sum over prefixes}}$}}%
}
\Bigl[
-\log \sigma \Bigl(
-\log
\underbrace{
\sum_{j \in \mathcal{N}_m}
\exp\left(s_{j,-}^m - s_{i,+}^m \right)
}_{%
\smash{\raisebox{-0.1ex}{$\scriptstyle\text{\clap{Sum over negative prefixes (Step-level)}}$}}%
}
\Bigr)
\Bigr].
\end{aligned}
\end{equation}
where $m$ indexes the prefix steps and $s^m$ represents the score at the specific step $m$.

\section{Learning Algorithm of \ourmethod{}}
\label{appendix:learning_algo}

Algorithm~\ref{alg:apao_unified} presents the unified learning procedure of \ourmethod{}, which accommodates both \textbf{Pointwise} and \textbf{Pairwise} variants through a \texttt{Mode} parameter. To facilitate implementation, we integrate PyTorch-style pseudocode specifically for the core loss calculations and the adaptive weighting mechanism.

\begin{algorithm}[h]
\scriptsize
\caption{Learning Algorithm of \ourmethod{}}
\label{alg:apao_unified}
\begin{algorithmic}[1]
\Require Training data $\mathcal{D}$; generator parameters $\theta$;
loss weight $\beta$; update rate $\eta$; mode $M$
\Statex \hspace{\algorithmicindent}
where $M\in\{\texttt{Pointwise},\texttt{Pairwise}\}$
\Ensure Optimized parameters $\theta$
\State Initialize $\theta$ and prefix weights
$\mathbf{w}\leftarrow[1/T,\ldots,1/T]$

\For{each mini-batch $(x,y)\in\mathcal{D}$}
    \If{$M=\texttt{Pairwise}$}
        \State Sample negatives $\{y_k^-\}_{k=1}^{K}$ for each $y$
        \State $\mathcal{C}\leftarrow\{y\}\cup\{y_k^-\}_{k=1}^{K}$
    \Else
        \State $\mathcal{C}\leftarrow\{y\}$
    \EndIf
    \State $\texttt{logits}\leftarrow \mathrm{Model}_{\theta}(x,\mathcal{C})$
    \State $\mathcal{L}_{\mathrm{CE}}\leftarrow
    \mathrm{CrossEntropy}(\texttt{logits of }y,y)$
    \State $\mathbf{L}_{\mathrm{prefix}}\leftarrow
    \textproc{CalcPrefixLoss}(\texttt{logits},\mathcal{C},M)$
    \State $\mathbf{w}\leftarrow
    \textproc{AdaptiveUpdate}(\mathbf{w},\mathbf{L}_{\mathrm{prefix}},\eta)$
    \State $\mathcal{L}_{\mathrm{total}}\leftarrow
    \mathcal{L}_{\mathrm{CE}}+
    \beta\cdot(\mathbf{w}\cdot\mathbf{L}_{\mathrm{prefix}}).\mathrm{sum}()$
    \State Update $\theta$ w.r.t. $\mathcal{L}_{\mathrm{total}}$
\EndFor

\Statex
\Function{CalcPrefixLoss}{\texttt{logits, cands, mode}}
    \State \texttt{log\_probs = F.log\_softmax(logits, dim=-1)}
    \State \texttt{token\_logp = log\_probs.gather(2, cands.unsqueeze(-1)).squeeze(-1)}
    \State \texttt{losses = []}
    
    \For{\texttt{t in 1 to T}}
        \If{\texttt{mode == 'Pointwise'}}
            \State \texttt{prefix\_logp = token\_logp[:, :t].sum(dim=1) / t}
            \State \texttt{loss\_t = -prefix\_logp.mean()}
        \Else \hfill \textcolor[RGB]{75,123,128}{\# Pairwise Mode}
            \State \texttt{prefix\_score = token\_logp[:, :t].sum(dim=1)}
            \State \texttt{pos\_score = prefix\_score[0]}
            \State \texttt{neg\_scores = prefix\_score[1:]}
            \State \texttt{pos\_exp = torch.exp(pos\_score)}
            \State \texttt{neg\_sum\_exp = torch.exp(neg\_scores).sum()}
            \State \texttt{denom = pos\_exp + neg\_sum\_exp + 1e-12}
            \State \texttt{loss\_t = -torch.log(pos\_exp / denom)}
            \State \texttt{loss\_t = loss\_t.mean()}
        \EndIf
        \State \texttt{losses.append(loss\_t)}
    \EndFor
    \State \Return \texttt{torch.stack(losses)}
\EndFunction

\Statex
\Function{AdaptiveUpdate}{\texttt{w\_last, losses, eta}}
    \State \texttt{base\_loss = losses.detach()}
    \State \texttt{w\_new = w\_last * torch.exp(eta * base\_loss)}
    \State \texttt{w\_new = w\_new / w\_new.sum()}
    \State \Return \texttt{w\_new}
\EndFunction

\end{algorithmic}
\end{algorithm}

\end{document}